\documentclass{JUSTC}

\RequirePackage[numbers,sort&compress]{natbib} 
\usepackage{url}
\usepackage{hyperref}

\usepackage{algorithm,algorithmicx,algpseudocode}%
\algrenewcommand\algorithmicrequire{\textbf{Input:}}  
\algrenewcommand\algorithmicensure{\textbf{Output:}} 

\type{Research Articles}
\title{Robust Wasserstein barycenter}

\author[1]{Zixiong Cheng}

\author[1,\Letter]{Hang Liu}

\email{hliu01@ustc.edu.cn}

\runningtitle{Robust Wasserstein barycenter}

\address{School of Management, University of Science and Technology of China, Jinzhai Rd, Hefei, 230026, Anhui Province}

\GraphicalAbstract{
\begin{figure}[htb]
	\centering     
	\includegraphics[width=5in]{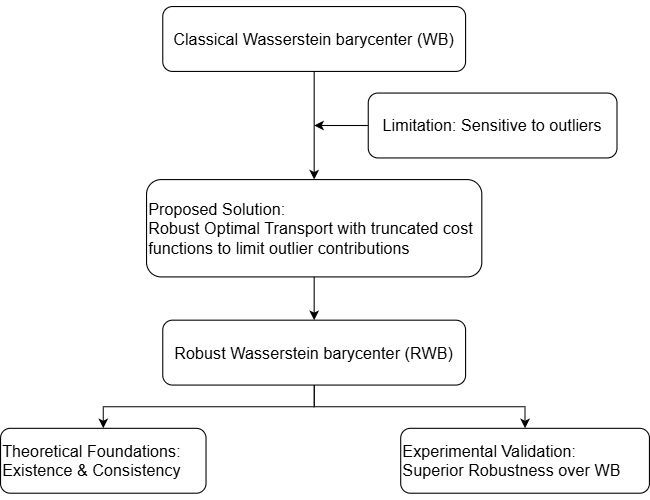}
\end{figure}
}

\abstract{
In this paper, we address a fundamental limitation of the classical Wasserstein barycenter—its sensitivity to outliers. 
To overcome these issues, we propose the robust Wasserstein barycenter (RWB) based on the recent concept of the robust optimal transport. Theoretical guarantees, including existence and consistency,
are established for the proposed RWB. Through extensive numerical experiments on both simulated and real-world data---including image processing and financial data analysis---we demonstrate that compared with the classical Wasserstein barycenter, the RWB is more robust.
}

\keywords{Robust statistics; outliers; Wasserstein barycenter; optimal transport}

\begin{document}

\maketitle

\section{Introduction}

The theory of optimal transport (OT), first introduced by \citet{monge1781} in the late 18th century, was revisited and generalized by \citet{Kantorovich2006OnTT} nearly two centuries later. Since the advent of efficient computational methods, OT has been broadly applied across various disciplines, including machine learning and statistics. 
In machine learning, \citet{WGAN} introduced WGAN as a GAN alternative, using the Wasserstein distance to enhance training stability, which avoids mode collapse, and reveals deep connections to other distribution distances. 
In statistics, the Wasserstein distance is used to quantify the discrepancy between probability distributions. Particularly in compressive statistical learning (CSL), \citet{2023CSL} provide a theoretical guarantee for the framework of CSL by linking the Wasserstein distance to the maximum mean discrepancies (MMD).
\citet{Srivastava} applied the Wasserstein distance and barycenter in divide-and-conquer Bayesian inference, using barycenters of subset posterior distributions in Wasserstein space to efficiently approximate full-data posteriors with theoretical accuracy guarantees.
\citet{Ramdas2015OnWT} took the Wasserstein distance as the core in nonparametric two-sample testing to connect diverse univariate and multivariate testing methods via smoothed or distribution-free variants.

The Wasserstein barycenter is defined as the distribution that minimizes the sum of Wasserstein distances to a given set of distributions, and it offers an approach to averaging multiple distributions, generalizing the concept of the barycenter to the space of probability measures.

\citet{2017Existence} established the existence and consistency of the $p$-Wasserstein barycenter, thereby providing a rigorous theoretical foundation for its application. Moreover, \citet{2013Fast} demonstrated the computational efficiency and practical utility of the Wasserstein barycenter, yielding remarkable applications such as image recognition and clustering.
Despite its wide applications in statistics, economics, and machine learning, the Wasserstein barycenter has a critical limitation: it is sensitive to outliers and contamination. 
In conventional dataset-related problems, robustness is typically achieved by removing extreme values; however, the Wasserstein barycenter, which is defined as the center of mass with respect to a collection of probability measures, complicates the identification of outliers. 

In \cite{ma2023theoretical}, the robust Wasserstein distance is defined by the concept of robust optimal transport (ROT) in \cite{ROT}, which truncates the distance function integrated in the computation of the Wasserstein distance, thus ensuring the robustness of the distance measure. 
Notably, this framework considers only the robust version of the classical $W_p$ distance for $p = 1$. 
We generalize the robust Wasserstein distance to the case of \( p \geq 1 \) and introduce the robust Wasserstein barycenter (RWB) as a robust alternative to the classical Wasserstein barycenter (WB).  
Extensive experiments in Sections 5 and 6 demonstrate that the RWB is more robust than the WB. 
With respect to the computation of the WB, \citet{simple_free} noted that although the iterative Bregman projections (IBP) algorithm \cite{IBP} is highly efficient and straightforward in most cases, it is a fixed-support algorithm implemented on a fixed predefined grid support, thus it has the curse of dimensionality. 
In contrast, free-support algorithms eliminate such grid-based support restrictions. Since the optimal barycenter solution is naturally sparse, the free-support framework effectively alleviates the curse of dimensionality.
Therefore, we utilize the simple and well-known fixed-point algorithm mentioned in \citet{simple_free} to improve the performance of the fixed-support IBP algorithm for our free-support method in the experiments. 
We find that, in addition to its intrinsic capacity to circumvent the curse of dimensionality, the free-support method yields superior barycenter results in our image processing experiments.

The structure of the paper is organized as follows.
Section 2 describes the essential concepts related to the robust Wasserstein barycenter, including optimal transport, the Fr{\'e}chet mean, and the robust Wasserstein distance. 
Section 3 proposes the new concept of the RWB and proves the existence and consistency of the RWB. 
Section 4 introduces a free-support algorithm for computing the RWB, which is built on existing fixed-support algorithms. 
Finally, Section 5 compares the RWB, the Wasserstein median proposed in \cite{YouShun} and WB through simulation and real-data analysis, illustrating the advantages of the RWB.

The main contributions of this paper are threefold: 
\begin{enumerate}
	\item[(i)] We propose the novel concept of the robust Wasserstein barycenter which overcomes the robustness issue of the classical Wasserstein barycenter.
	\item[(ii)] We prove the existence and consistency of the RWB. Furthermore, we characterize the size of the support of the robust Wasserstein barycenter under the condition that all input distributions are discrete with finite supports.
	\item[(iii)] We introduce a free-support method for computing RWB, which strengthens existing fixed-support algorithms and resolves the difficulty of selecting proper supports for barycenter computation encountered by fixed-support approaches.
\end{enumerate}

\section{Preliminaries}

In this section, we introduce the concepts of the Fr{\'e}chet mean, optimal transport and Wasserstein distance that are relevant for the definition of the robust Wasserstein barycenter.

\subsection{Fr{\'e}chet mean}
Given a metric space $(\mathcal{X}, d)$ and a positive integer $k$, {\it $k$-Fr{\'e}chet mean} introduced in \cite{Frechet1948} is defined for a random variable $X \in \mathcal{X}$ as
$$\theta_k := \underset{x \in \mathcal{X}}{\arg \min} \mathbb{E}[d^{\mathrm{k}}\left(X,x\right)].$$
The $k$-Fr{\'e}chet mean generalizes the concept of the mean from Euclidean space to generic metric spaces. When $(\mathcal{X}, d)$ is Euclidean, the \(k\)-Fr{\'e}chet mean coincides with some well-known measures of central tendency: for \( \mathrm{k} = 1 \), it corresponds to the usual median, and for \(k = 2\), it corresponds to the arithmetic mean.

Given observations $X_1, . . . , X_n$ that are i.i.d. copies of $X$, the {\it empirical $k$-Fr{\'e}chet mean} is naturally defined as: 
$$\hat{\theta}_k := \underset{x \in \mathcal{X}}{\arg \min} \sum_{i=1}^{n} d^\mathrm{k}\left(X_i, x\right). $$

\subsection{Optimal transport}
Let $({\cal X}, d)$ and $({\cal Y}, d)$ be two complete and separable metric spaces, and let ${\cal P}({\cal X})$ denote the set of all probability measures on ${\cal X}$. Given two probability measures $\mu \in \mathcal{P(X)}$, $\nu \in \mathcal{P(Y)}$ and a cost function $c: \mathcal{X \times Y}\rightarrow [0,+\infty]$, the optimal transport problem first proposed by \cite{monge1781} is to solve the following minimization equation
$$\inf \left\lbrace  {\int_{\mathcal{X}} c(x, \mathcal{T}(x)) d \mu(x): \mathcal{T}\# \mu =\nu}  \right\rbrace,$$
where $\mathcal{T} \# \mu =\nu$ indicates that $\mathcal{T}$ pushes $\mu$ forward to $\nu$. While Monge's concept of finding a push-forward map that minimizes the total transport cost is quite intuitive, it faces two critical limitations that hinder its practical implementation. 
First, solutions may not exist for certain measures. For example, no map can push a discrete probability measure forward to a continuous one. Second, the set $\{\mu, {\cal T}(\nu)\}$ of all measurable transport maps is nonconvex, which makes Monge's minimization problem computationally challenging.

Monge’s problem was revisited by \citet{Kantorovich2006OnTT}, who proposed a more flexible and computationally feasible formulation. Let $\Pi (\mu,\nu)$ denote the set of all couplings of $\mu \in \mathcal{P(X)}$ and $\nu \in \mathcal{P(Y)}$. Kantorovich's problem aims to find an optimal coupling $\gamma   \in \Pi (\mu,\nu)$ between $X$ and $Y$, which is formulated as:
$$\inf \left\lbrace \int_{\mathcal{X\times Y}} c(x,y) d \gamma(x,y) : \gamma \in \Pi(\mu,\nu) \right\rbrace.$$

Let $\mathcal{P}_{p}(\mathcal{X}) = \{\mu \in \mathcal{P}(\mathcal{X}): \int_{\mathcal{X}} d^{p}(x, x_0)  \mathrm{d} \mu<+\infty \, \text{for some} \, x_0 \in {\cal X}\}$ denote the set of all probability measures on ${\cal X}$ with a finite moment of order $p$ $(p \in [1,+\infty))$. Given two probability measures $\mu, \nu \in \mathcal{P}_{p}(\mathcal{X})$, the $p$-Wasserstein distance $W_p(\mu, \nu)$ is defined as the minimum value of the optimal transport problem, that is,
$$W_{p}\left(\mu, \nu\right):=\underset{\gamma \in  \Pi\left(\mu,\nu\right)}{\inf} \left\{\int_{\mathcal{X} \times \mathcal{X}}d^{p}\left(x,y\right) \mathrm{d} \gamma \right\}^{\frac{1}{p}}.$$ 
In contrast to the Kullback-Leibler (KL) divergence \cite{KL} and the Jensen-Shannon (JS) divergence \cite{JS}, the Wasserstein distance is a metric between probability measures. 
Consequently, $\mathcal{P}_{p}$, when endowed with $W_p$, is a metric space referred to as the Wasserstein space \cite{villani2009}. 
Another advantage of the Wasserstein distance over the KL divergence is that the latter is well defined only if two probability measures have the same support, whereas the former can be applied to probability measures with different supports.

\section{Robust Wasserstein barycenter}
In this section, building upon the robust optimal transport in \cite{ROT} and \cite{ma2023theoretical}, we introduce the robust Wasserstein distance $W_p^{(\lambda)}$ as a robust version of the classical Wasserstein distance $W_p$. Building upon the $W_p^{(\lambda)}$, we propose the concept of the robust Wasserstein barycenter (RWB) and establish the existence and consistency of the RWB.

\subsection{ Robust Wasserstein distance}
Since the Wasserstein distance is sensitive to outliers, the main idea of the robust optimal transport frameworks in \cite{ROT} and  \cite{ma2023theoretical} is to truncate the cost function of the classical $W_1$ distance, thereby yielding the minimization problem:
$$\underset{\gamma \in  \Pi\left(\mu,\nu\right)}{\inf} \left\{\int_{\mathcal{X}\times \mathcal{X}}c^{\left(\lambda\right)}\left(x,y\right) \mathrm{d} \gamma \right\},
$$
where
\begin{equation}\label{eq.c_lam}
	c^{(\lambda)}(x, y) := \min\{d(x,y),\lambda\},
\end{equation} 
with $\lambda>0$ as a regularization parameter controlling the truncation level. 
According to \cite{ma2023theoretical}, since $c^{(\lambda)}$ is a metric on $\mathcal{X}$, the robust Wasserstein distance is defined as:
$$W^{(\lambda)}(\mu, \nu):=\underset{\gamma \in  \Pi(\mu, \nu)}{\min} \left\{\int_{\mathcal{X} \times \mathcal{X}}c^{(\lambda)}(x, y) \mathrm{d} \gamma \right\},$$
which is a metric on $\mathcal{P(X)}$ and coincides with the classical $W_1$ distance on the space of probability measures over the metric space $({\cal X}, c^{(\lambda)})$.

We consider a $p$-th generalized version of $W^{(\lambda)}$ and define the {\it  robust $W_p$ distance} ($p\geq 1$) as
$$W_p^{(\lambda)}(\mu, \nu) := \underset{\gamma \in  \Pi(\mu, \nu)}{\min} \left\{\int_{\mathcal{X} \times \mathcal{X}} \left[c^{(\lambda)}(x, y)\right]^{p} \mathrm{d} \gamma \right\}^{\frac{1}{p}}.
$$
Like $W^{(\lambda)}$, $W_p^{(\lambda)}$ is a metric on $\mathcal{P(X)}$. We refer to $\mathcal{P(X)}$ endowed with the metric $W_p^{(\lambda)}$ as the {\it $p$-th robust Wasserstein space}, which we denote as ${\cal W}_p^{(\lambda)}({\cal X})$. 
The most commonly adopted values for \( p \) are \( p = 1 \) and \( p = 2 \). As noted in \cite{villani2009}, the \( W_1 \) Wasserstein distance is more robust and less sensitive to outliers, whereas the \( W_2 \) Wasserstein distance exhibits superior geometric properties and is more computationally efficient. 
Since the robust Wasserstein distance remains within the framework of classical Wasserstein distance, we conjecture that $W_p^{(\lambda)}$ inherits these respective characteristics.

\subsection{Robust Wasserstein barycenter}
\textcolor{black}{We now propose the robust Wasserstein barycenter.  Consider a random probability measure $\mu$ in ${\cal W}_p^{(\lambda)}({\cal X})$ that follows a distribution $\Lambda$, where $\Lambda$ lies in the space ${\cal W}_k({\cal W}_p^{(\lambda)}({\cal X}))$ ($k\geq1$), which is meaningful because ${\cal W}_p^{(\lambda)}({\cal X})$ is a complete and separable metric space \cite{villani2009}. To distinguish the notation, we denote $\mathbb{W}_k$ as the metric defined on $\mathcal{W}_k(\mathcal{W}_p^{(\lambda)}(\mathcal{X}))$. Furthermore, since the robust Wasserstein distance satisfies $W_p^{(\lambda)} \leq \lambda$ identically, $\mathbb{W}_k^{(\lambda)}$ is equivalent to $\mathbb{W}_k$. This further indicates that the two spaces $\mathcal{W}_k^{(\lambda)}(\mathcal{W}_p^{(\lambda)}(\mathcal{X}))$ and $\mathcal{W}_k(\mathcal{W}_p^{(\lambda)}(\mathcal{X}))$ are identical.
Then for any $\nu \in {\cal W}_p^{(\lambda)}({\cal X})$, we can write: 
$$ 
[\mathbb{W}_k(\Lambda, \delta_{\nu})]^k = \int [W_p^{(\lambda)}(\mu, \nu)]^k d\Lambda(\mu) = \mathbb{E}[W_p^{(\lambda)}(\mu, \nu)]^k,
$$
}
where $\delta_{\nu}$ denotes a Dirac measure.

\begin{definition}{\rm 
        \textcolor{black}{For a random measure $\mu$ with distribution  $\Lambda \in {\cal W}_k({\cal W}_p^{(\lambda)}({\cal X}))$, 
        the robust Wasserstein barycenter $\nu_{p, k}^{(\lambda)}$ $(k \geq 1)$ is defined as:
		$$\nu_{p, k}^{(\lambda)} :=\underset{\xi \in {\cal W}_p^{(\lambda)}({\cal X})}{\arg\min}\mathbb{E}\left[W_p^{(\lambda)}(\mu, \xi)\right]^k.$$}
	}
\end{definition}\label{def.RWB}

Let $\mu_i \in {\cal W}_p^{(\lambda)}({\cal X})$, $i = 1, \ldots, n$,  be a sequence of probability measures. The RWB with respect to $\Lambda =\sum_{i=1}^{n}w_i\delta_{\mu_i}$ can be written as:
$${\nu}_{p, k}^{(\lambda, n)} :=  \underset{\xi \in {\cal W}_p^{(\lambda)}({\cal X})} {\arg\min} \sum_{i=1}^{n}w_i \left[W_p^{(\lambda)}\left({\mu_i}, \xi\right)\right]^k,$$
where $\{\delta_{\mu_i}\}_{i=1}^{n}$ denotes a sequence of Dirac measures, each concentrated at a point \(\mu_i\) in ${\cal W}_p^{(\lambda)}({\cal X})$. The weights $w_i$ for $i = 1, \ldots, n$, satisfy \(\sum_{i=1}^{n}w_i=1\), ensuring that $\sum_{i=1}^{n}w_i\delta_{\mu_i}$ is a valid distribution. 
Additionally, in applications, it is typical to set \( k = p \), which is the most common form of the Wasserstein barycenter and satisfies the requirements for most scenarios.

The following result establishes the existence of the RWB $\nu_{p,k}^{(\lambda)}$ for any distribution $\Lambda \in {\cal W}_k({\cal W}_p^{(\lambda)}({\cal X}))$.

\begin{proposition}[Existence]\label{prop.exist}
	\textcolor{black}{Let \((\mathcal{X}, d)\) be a complete, separable, and compact metric space. Let $\mu$ be a random measure with distribution $\Lambda \in {\cal W}_k({\cal W}_p^{(\lambda)}({\cal X}))$. For any $k \ge 1$, the functional
	$ F(\nu) = \mathbb{E}\left[W_p^{(\lambda)}(\nu, \mu)\right]^k $
	admits a minimizer $\nu^* \in \mathcal{W}_p^{(\lambda)}(\mathcal{X})$.}
\end{proposition}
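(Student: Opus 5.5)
We use the direct method of the calculus of variations: show that $\mathcal{W}_p^{(\lambda)}(\mathcal{X})$ is compact and that $F$ is continuous on it. A continuous function on a compact metric space attains its infimum, so $F$ has a minimizer. Since $(\mathcal{X},d)$ is compact, Prokhorov's theorem makes $\mathcal{P}(\mathcal{X})$ compact for the weak topology; every family of measures on a compact space is tight. So it suffices to check two things: $W_p^{(\lambda)}$ metrizes weak convergence on $\mathcal{P}(\mathcal{X})$, and $F$ is continuous (or at least lower semicontinuous) with respect to $W_p^{(\lambda)}$.

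\textbf{Step 1 (topology).} The cost $c^{(\lambda)}=\min\{d,\lambda\}$ is a metric on $\mathcal{X}$ equivalent to $d$. On sets of diameter less than $\lambda$ the two coincide, and globally $c^{(\lambda)}\le d\le (\operatorname{diam}\mathcal{X}/\lambda)\,c^{(\lambda)}$ whenever $\operatorname{diam}\mathcal{X}$ is finite, which holds by compactness. Hence $(\mathcal{X},c^{(\lambda)})$ is a compact metric space with the same topology as $(\mathcal{X},d)$. On a compact (hence bounded) metric space, the Wasserstein distance $W_p$ built from that metric metrizes weak convergence, by the classical result in Villani. Applying this to $c^{(\lambda)}$ shows that $W_p^{(\lambda)}$ metrizes weak convergence on $\mathcal{P}(\mathcal{X})$. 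Combined with Prokhorov, $\mathcal{W}_p^{(\lambda)}(\mathcal{X})$ is a compact metric space. Alternatively, one can show compactness directly: from any sequence, extract a weakly convergent subsequence, then pass to the limit in optimal couplings, which stay tight on $\mathcal{X}\times\mathcal{X}$. Lower semicontinuity of $\gamma\mapsto\int (c^{(\lambda)})^p\,d\gamma$ gives the needed convergence.

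\textbf{Step 2 (continuity of $F$).} Let $\nu_j\to\nu$ in $W_p^{(\lambda)}$. By the triangle inequality, $|W_p^{(\lambda)}(\nu_j,\mu)-W_p^{(\lambda)}(\nu,\mu)|\le W_p^{(\lambda)}(\nu_j,\nu)$ for every $\mu$, so $W_p^{(\lambda)}(\nu_j,\mu)\to W_p^{(\lambda)}(\nu,\mu)$ pointwise in $\mu$. Because $W_p^{(\lambda)}\le\lambda$, the integrands $[W_p^{(\lambda)}(\nu_j,\mu)]^k$ are bounded by $\lambda^k$. Dominated convergence under $\Lambda$ then gives $F(\nu_j)\to F(\nu)$. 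The bound also shows $F$ is finite everywhere, since $F\le\lambda^k$, so $\inf F$ is finite. Measurability of $\mu\mapsto W_p^{(\lambda)}(\nu,\mu)$ follows from its $1$-Lipschitz continuity.

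\textbf{Step 3 (conclusion).} Take a minimizing sequence $\nu_j$ with $F(\nu_j)\to\inf F$. By Step 1, extract a subsequence converging in $W_p^{(\lambda)}$ to some $\nu^*$. By Step 2, $F(\nu^*)=\inf F$, so $\nu^*$ is a minimizer.

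The main obstacle is Step 1: rigorously identifying the $W_p^{(\lambda)}$-topology with the weak topology, that is, showing that the truncated cost still yields a metric that metrizes weak convergence. I would handle this by noting that $c^{(\lambda)}$ is a bounded metric inducing the same topology as $d$, and then citing the general theorem in Villani for Polish spaces with a bounded metric. Compactness of $\mathcal{X}$ is used only there; Step 2 needs nothing beyond the uniform bound $\lambda$.
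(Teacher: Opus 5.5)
Your argument is correct and follows the paper's outline: take a minimizing sequence, apply Prokhorov on the compact $\mathcal{X}$, and pass to the limit. The difference is the key fact used to pass to the limit. The paper works only with weak convergence $\nu_{j_l}\to\nu^*$. For each fixed $\mu$ it uses lower semicontinuity of the transport cost under weak convergence, $W_p^{(\lambda)}(\nu^*,\mu)\le\liminf_l W_p^{(\lambda)}(\nu_{j_l},\mu)$, and then Fatou's lemma gives $F(\nu^*)\le\inf F$. So it proves only weak lower semicontinuity of $F$ and never needs $W_p^{(\lambda)}$ to metrize weak convergence. You instead apply the metrization theorem to the bounded metric $c^{(\lambda)}$, which makes $\mathcal{W}_p^{(\lambda)}(\mathcal{X})$ a compact metric space. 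The triangle inequality and $W_p^{(\lambda)}\le\lambda$ then give genuine continuity of $F$; in fact $|F(\nu)-F(\nu')|\le k\lambda^{k-1}W_p^{(\lambda)}(\nu,\nu')$, so dominated convergence is not even needed. The paper's route uses the lighter tool. Yours uses a heavier theorem but gives more: $F$ is Lipschitz on a compact metric space, the set of minimizers is compact, and weak and $W_p^{(\lambda)}$-convergence become interchangeable. That last point is exactly what the paper has to argue separately at the end of its consistency proof. You also correctly bound $F\le\lambda^k$ where the paper writes $\lambda$.

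Two small fixes are needed. First, the inequality $d\le(\mathrm{diam}\,\mathcal{X}/\lambda)\,c^{(\lambda)}$ fails when $\mathrm{diam}\,\mathcal{X}<\lambda$, since then $c^{(\lambda)}=d$ and the constant is below $1$. Use $\max\{1,\mathrm{diam}\,\mathcal{X}/\lambda\}$ instead. Alternatively, note that $d$ and $c^{(\lambda)}$ agree at scales below $\lambda$, so they induce the same topology without any compactness.

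Second, your ``alternative'' direct compactness argument does not work as written. Lower semicontinuity of $\gamma\mapsto\int (c^{(\lambda)})^p\,d\gamma$ only bounds the limit of the costs from below, so it cannot give $W_p^{(\lambda)}(\nu_j,\nu)\to0$. One option is stability of optimal plans. A weak limit of optimal couplings in $\Pi(\nu_j,\nu)$ is optimal in $\Pi(\nu,\nu)$, hence has zero cost, and the bounded continuous cost passes to the limit. Another option is Kantorovich--Rubinstein duality with Arzel\`a--Ascoli, combined with $(c^{(\lambda)})^p\le\lambda^{p-1}c^{(\lambda)}$. Your main route, citing the metrization theorem, is fine.
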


\begin{proof}
	Obviously, for any \(\nu \in {\cal W}_p^{(\lambda)}(\mathcal{X})\), we have: $$F(\nu) = \mathbb{E}\left[ W_{p}^{(\lambda)}(\nu, \textcolor{black}{\mu})\right]^k \leq \lambda < \infty.$$ We can take a minimizing sequence $\left\{\nu_j\right\} \subset {\cal W}_p^{(\lambda)}(\mathcal{X})$.
	For any $R>0$ and any fixed point \( x_0 \in \mathcal{X}\), we have that
	$$
	\nu_j\left(\left\{x \in \mathcal{X} \mid {c^{\left(\lambda\right)}\left(x,x_0\right)} \geq R\right\}\right) \leq \frac{1}{R} \int_{\{x:{c^{\left(\lambda\right)}\left(x,x_0\right)} \geq R\}}{c^{\left(\lambda\right)}\left(x,x_0\right)} d \nu_j\left(x\right) \leq \frac{\lambda}{R}.
	$$
	For any \(\epsilon > 0 \), we can choose a sufficiently large \( R \) such that \( R > \lambda/\epsilon \). In fact, we might let \( \epsilon < 1 \), then \( R > \lambda \).
	For the closed ball $$B_R(x_0)=\left\{x \in \mathcal{X} \mid {c^{\left(\lambda\right)}\left(x,x_0\right)} \leq R\right\} = \mathcal{X},$$ we have
	$\nu_j(B_R) = 1 > 1 - \epsilon,$
	for any \( j \).
	
	{The sequence of measures $\left\{\nu_j\right\}$ is tight since \( ({\cal X}, c^{\lambda}) \) is a complete, separable and compact metric space.} 
	By the Prokhorov’s theorem {\cite{Prokhorov}}, there exists a subsequence $\left\{\nu_{j_l}\right\}$ that converges weakly to some $\nu^* \in {\cal W}_p^{(\lambda)}(\mathcal{X})$.  
	Then we have 
	$$
	\begin{aligned}
		F\left(\nu^*\right)
		& =\mathbb{E}\left[W_{p}^{\left(\lambda\right)}\left(\nu^*, \textcolor{black}{\mu}\right)\right]^k \\ 
		& \leq
		\mathbb{E} \liminf _{j_l \rightarrow \infty} \left[ W_{p}^{\left(\lambda\right)} \left(\nu_{j_l}, \textcolor{black}{\mu}\right)\right]^k \\
		& \leq 
		\liminf _{j_l \rightarrow \infty} \mathbb{E}\left[ W_{p}^{\left(\lambda\right)}\left(\nu_{j_l}, \textcolor{black}{\mu}\right)\right]^k \\
		&=\inf F,
	\end{aligned}
	$$ 
	where the first inequality is due to the lower semicontinuity of the Wasserstein distance \cite{Panaretos2020}, and the second inequality is a consequence of Fatou's lemma. This completes the proof. 
\end{proof}

Moreover, we also establish the consistency of the RWB, which is stated as follows.
\begin{proposition}[Consistency]\label{prop.consistency}
	Let \textcolor{black}{$\mu_j \sim \Lambda_j $}, $j \in \mathbb{N}$ be a sequence of random measures with a sequence of robust Wasserstein barycenters  $\nu_{p, k, j}^{(\lambda)}$ ($k, p \geq 1$). Assuming that $\textcolor{black}{\mathbb{W}_k}(\Lambda_j, \Lambda) \rightarrow 0$ \textcolor{black}{holds for a distribution $\Lambda \in \mathcal{W}_k(\mathcal{W}_p^{(\lambda)}(\mathcal{X}))$ as $j \rightarrow \infty$.} Then the sequence $\{\nu_{p, k, j}^{(\lambda)}\}$ is tight and \textcolor{black}{every weak subsequential limit} is a robust Wasserstein barycenter of $\Lambda$.  
\end{proposition}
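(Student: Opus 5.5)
Throughout, I work in the setting of Proposition~\ref{prop.exist}: $(\mathcal{X},d)$ is compact, so $(\mathcal{X},c^{(\lambda)})$ is compact as well. The plan follows the classical consistency argument for Fr\'echet means (as in \cite{2017Existence}), adapted to the bounded metric $W_p^{(\lambda)}$.

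\textbf{Step 1: tightness.} Any family of probability measures on a compact space is tight. Prokhorov's theorem \cite{Prokhorov} then guarantees that every subsequence of $\{\nu_{p,k,j}^{(\lambda)}\}$ has a further subsequence converging weakly to some $\nu^*\in\mathcal{W}_p^{(\lambda)}(\mathcal{X})$.

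Next I would check that weak convergence is the same as $W_p^{(\lambda)}$-convergence. The cost $[c^{(\lambda)}]^p$ is bounded and continuous, and $c^{(\lambda)}$ is a metric inducing the same topology as $d$. Hence $W_p^{(\lambda)}$ metrizes weak convergence on $\mathcal{P}(\mathcal{X})$ (Villani \cite{villani2009}), so $W_p^{(\lambda)}(\nu_{j_l},\nu^*)\to0$ along the subsequence.

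\textbf{Step 2: uniform convergence of the objectives.} Write
$$F_j(\xi)=\int [W_p^{(\lambda)}(\mu,\xi)]^k\,d\Lambda_j(\mu),\qquad F(\xi)=\int [W_p^{(\lambda)}(\mu,\xi)]^k\,d\Lambda(\mu).$$
For fixed $\xi$, the map $\mu\mapsto [W_p^{(\lambda)}(\mu,\xi)]^k$ takes values in $[0,\lambda^k]$. It is also Lipschitz with constant $k\lambda^{k-1}$, since $|a^k-b^k|\le k\lambda^{k-1}|a-b|$ for $a,b\in[0,\lambda]$, combined with the triangle inequality for $W_p^{(\lambda)}$.

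Take an optimal coupling of $\Lambda_j$ and $\Lambda$ for $\mathbb{W}_k$. Integrating the Lipschitz bound against it and applying H\"older's inequality gives
$$\sup_\xi |F_j(\xi)-F(\xi)|\le k\lambda^{k-1}\,\mathbb{W}_1(\Lambda_j,\Lambda)\le k\lambda^{k-1}\,\mathbb{W}_k(\Lambda_j,\Lambda)\to0.$$
The same Lipschitz estimate, now in the $\xi$ variable, gives
$$|F(\xi)-F(\xi')|\le k\lambda^{k-1}W_p^{(\lambda)}(\xi,\xi').$$
So each $F_j$ and $F$ is equicontinuous in $\xi$.

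\textbf{Step 3: passing to the limit.} Along the subsequence with $\nu_{j_l}\to\nu^*$, and for any $\xi$,
$$F(\nu^*)\le F(\nu_{j_l})+k\lambda^{k-1}W_p^{(\lambda)}(\nu_{j_l},\nu^*)\le F_{j_l}(\nu_{j_l})+o(1)+o(1)\le F_{j_l}(\xi)+o(1)\le F(\xi)+o(1).$$
Here the third inequality uses the minimality of $\nu_{j_l}$ for $F_{j_l}$, and the last uses Step 2 again. Letting $l\to\infty$ gives $F(\nu^*)\le F(\xi)$ for all $\xi$, so $\nu^*$ is a robust Wasserstein barycenter of $\Lambda$. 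If one prefers to avoid the metrization fact, lower semicontinuity of $W_p^{(\lambda)}$ under weak convergence together with Fatou's lemma, as in Proposition~\ref{prop.exist}, yields $F(\nu^*)\le\liminf_l F(\nu_{j_l})$, which suffices for the first inequality.

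\textbf{Main obstacle.} The step needing the most care is the identification of weak convergence with $W_p^{(\lambda)}$-convergence, or at least the lower semicontinuity needed in Step 3. Both rely on $c^{(\lambda)}$ being a bounded continuous metric topologically equivalent to $d$. I would verify this directly: $\min\{d,\lambda\}$ and $d$ have the same small balls, which also gives completeness and separability. The uniform bound in Step 2 is otherwise routine, because the truncation makes every integrand bounded by $\lambda^k$.
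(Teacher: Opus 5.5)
Your proof is correct, but it takes a different route from the paper.

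\textbf{The paper's route.} The paper follows the template of \cite{2017Existence}. It invokes Skorokhod's representation theorem to realize $\mu_{j_l}\to\mu$ almost surely. It then chains minimality of $\nu_{p,k,j_l}^{(\lambda)}$, Fatou's lemma and lower semicontinuity of $W_p^{(\lambda)}$ to get $F(\xi)\ge F(\nu^*)$. Afterwards it adds a separate, rather roundabout argument to upgrade weak convergence to $W_p^{(\lambda)}(\nu_{p,k,j_l}^{(\lambda)},\nu^*)\to 0$. That argument goes through convergence of $\mathbb{W}_k(\delta_{\nu_{p,k,j_l}^{(\lambda)}},\Lambda)$ and Theorem 2.2.1 of \cite{Panaretos2020}.

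\textbf{Your route.} You instead exploit the truncation directly. Since $W_p^{(\lambda)}\le\lambda$, the integrand $[W_p^{(\lambda)}(\mu,\xi)]^k$ is $k\lambda^{k-1}$-Lipschitz in each argument. This yields the uniform bound $\sup_\xi|F_j(\xi)-F(\xi)|\le k\lambda^{k-1}\mathbb{W}_k(\Lambda_j,\Lambda)$, and the limit passage becomes a short $o(1)$ chain with no Skorokhod coupling or Fatou.

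\textbf{What each approach buys.} Your approach gives more than the statement asks:
\begin{itemize}
\item The optimal values converge quantitatively, $|\min F_j-\min F|\le k\lambda^{k-1}\mathbb{W}_k(\Lambda_j,\Lambda)$.
\item The $W_p^{(\lambda)}$-convergence of the subsequence, which the paper works hard for, is immediate from the metrization fact you cite. That fact holds because $c^{(\lambda)}$ is a bounded metric topologically equivalent to $d$.
\end{itemize}
The paper's argument is the standard one for untruncated barycenters. There $a\mapsto a^k$ is not globally Lipschitz on the unbounded range of $W_p$, and only lower semicontinuity is available. Here, however, it does not use the boundedness that truncation provides.

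Both proofs need compactness of $\mathcal{X}$ for tightness. The paper assumes it only implicitly, through Proposition~\ref{prop.exist}; you do well to state it explicitly.
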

\begin{proof}
According to the proof of proposition~\ref{prop.exist}, the sequence of robust Wasserstein barycenters $\{\nu_{p, k, j}^{(\lambda)}\}$ is tight. We can take any subsequence $\{\nu_{p, k, j_l}^{(\lambda)}\}$ that weakly converges to some $\nu^* \in {\cal W}_p^{(\lambda)}(\mathcal{X})$. 

Next, we will prove that the point $\nu^*$ is also a robust Wasserstein barycenter. 
Considering random measures $\mu_{j_l}$ and $\mu$, which are random copies of $\Lambda_{j_l}$ and $\Lambda$, respectively. 
Then, we aim to prove that $\mathbb{E}\left[ W_{p}^{\left(\lambda\right)}\left(\xi, \mu\right)\right]^k \geq \mathbb{E}\left[ W_{p}^{\left(\lambda\right)}\left(\nu^*, \mu\right)\right]^k$ holds for any choice of $\xi \in$ $\mathcal{W}_p^{\left(\lambda\right)}\left(\mathcal{X}\right)$.
\begin{equation}  
	\begin{aligned}
		\mathbb{E} \left[ W_{p}^{\left(\lambda\right)}\left(\xi, \mu\right)\right]^k 
		& = \textcolor{black}{\left[\mathbb{W}_k\left(\delta_{\xi}, \Lambda\right)\right]^k} \\
		& = \textcolor{black}{\lim _{j_l \rightarrow \infty} \left[\mathbb{W}_k\left(\delta_{\xi}, \Lambda_{j_l}\right)\right]^k \text{since $\mathbb{W}_k(\Lambda_j, \Lambda) \rightarrow 0$} }\\
		& =\lim _{j_l \rightarrow \infty} \mathbb{E} \left[W_{p}^{\left(\lambda\right)}\left(\xi, \mu_{j_l}\right)\right]^k   \\
		& \geq \lim _{j_l \rightarrow \infty} \mathbb{E} \left[W_{p}^{\left(\lambda\right)}\left(\nu_{p, k, j_l}^{(\lambda)}, \mu_{j_l}\right)\right]^k \\
		& \geq \mathbb{E} \liminf _{j_l \rightarrow \infty} \left[W_{p}^{\left(\lambda\right)}\left(\nu_{p, k, j_l}^{(\lambda)}, \mu_{j_l}\right)\right]^k \text{ by Fatou's lemma } \\
		& \geq \mathbb{E} \left[W_{p}^{\left(\lambda\right)}\left(\nu^*, \mu\right)\right]^k.
	\end{aligned}
	\label{proof2}
\end{equation} 
The first inequality holds because every element in the sequence $\left\{\nu_{p, k, j_l}^{(\lambda)}\right\}$ is a robust Wasserstein barycenter that minimizes the expected robust Wasserstein distance to the random measure \textcolor{black}{\(\mu_{j_l} \sim \Lambda_{j_l}\)}. 
Furthermore, the convergence of $\Lambda_{j_l} \rightarrow \Lambda$ allows us to construct $\mu_{j_l} \rightarrow \mu$ almost surely by Skorokhod's representation theorem in \cite{1999Convergence}. 
Then the last inequality holds because the Wasserstein distance is lower semicontinuous and the robust Wasserstein distance is a special case of the Wasserstein distance.

We then aim to further prove $W_{p}^{\left(\lambda\right)}\left(\nu_{p, k, j_l}^{(\lambda)},\nu^*\right) \rightarrow 0$. 
If $\xi=\nu^*$, Equation \eqref{proof2} is in fact an equality, which implies that \textcolor{black}{$ \mathbb{W}_k\left(\delta_{\nu_{p, k, j_l}^{(\lambda)}}, \Lambda_{j_l}\right) \rightarrow \mathbb{W}_k\left(\delta_{\nu^*}, \Lambda\right)$}. 
By the triangle inequality, we have
\begin{align*}
& \textcolor{black}{\mathbb{W}_k\left(\delta_{\nu_{p, k, j_l}^{(\lambda)}}, \Lambda\right)-\mathbb{W}_k\left(\delta_{\nu^*}, \Lambda\right)}  \\
\textcolor{black}{\leq} & \textcolor{black}{\mathbb{W}_k\left(\delta_{\nu_{p, k, j_l}^{(\lambda)}}, \Lambda_{j_l}\right)+\mathbb{W}_k\left(\Lambda_{j_l}, \Lambda\right)-\mathbb{W}_k\left(\delta_{\nu^*}, \Lambda\right) \rightarrow 0}.
\end{align*}

Since $\nu^*$ is a minimizer of \textcolor{black}{$\mathbb{E}[W_{p}^{\left(\lambda\right)}\left(\nu,\mu\right)]^k$}, we have \textcolor{black}{$\left[\mathbb{W}_k\left(\delta_{\nu_{p, k, j_l}^{(\lambda)}},\Lambda\right)\right]^k
\geq
\left[\mathbb{W}_k\left(\delta_{\nu^*}, \Lambda\right)\right]^k$},
which implies that $\mathbb{W}_k\left(\delta_{\nu_{p, k, j_l}^{(\lambda)}}, \Lambda\right)-\mathbb{W}_k\left(\delta_{\nu^*}, \Lambda\right) \geq 0$. \textcolor{black}{Hence, $$\lim _{j_l \rightarrow \infty} \mathbb{W}_k\left(\delta_{\nu_{p, k, j_l}^{(\lambda)}}, \Lambda\right) = \mathbb{W}_k\left(\delta_{\nu^*}, \Lambda\right).$$
Therefore, we have
\begin{align*}
	\textcolor{black}{\mathbb{E} \left[ W_{p}^{\left(\lambda\right)}\left(\nu^*, \mu\right)\right]^k} 
	& \textcolor{black}{= \left[\mathbb{W}_k\left(\delta_{\nu^*}, \Lambda\right)\right]^k} \\
	& \textcolor{black}{= \lim _{j_l \rightarrow \infty} \left[\mathbb{W}_k\left(\delta_{\nu_{p, k, j_l}^{(\lambda)}}, \Lambda\right)\right]^k} \\
	& \textcolor{black}{=\lim _{j_l \rightarrow \infty} \mathbb{E} \left[W_{p}^{\left(\lambda\right)}\left(\nu_{p, k, j_l}^{(\lambda)}, \mu\right)\right]^k}  \\
	& \textcolor{black}{\geq \mathbb{E} \liminf _{j_l \rightarrow \infty} \left[W_{p}^{\left(\lambda\right)}\left(\nu_{p, k, j_l}^{(\lambda)}, \mu\right)\right]^k \text{ by Fatou's lemma}.}
\end{align*}
}
\textcolor{black}{Note that $\liminf\limits_{j_l \rightarrow \infty}W_{p}^{\left(\lambda\right)}\left(\nu_{p, k, j_l}^{(\lambda)}, \mu\right) \geq W_{p}^{\left(\lambda\right)}\left(\nu^*, \mu\right)$ by the lower semicontinuity of $W_{p}^{\left(\lambda\right)}$. Therefore, we have $\Lambda-a.s.$  }
$$
\textcolor{black}{\liminf _{j_l \rightarrow \infty} W_{p}^{\left(\lambda\right)}\left(\nu_{p, k, j_l}^{(\lambda)}, \mu\right)=W_{p}^{\left(\lambda\right)}\left(\nu^*, \mu\right)},
$$
which implies that $W_{p}^{\left(\lambda\right)}\left(\nu_{p, k, j_l}^{(\lambda)}, \xi\right) \rightarrow W_{p}^{\left(\lambda\right)}\left(\nu^*, \xi\right)$ for any $\xi \in {\cal W}_p^{(\lambda)}(\mathcal{X})$ almost surely with respect to $\Lambda$. By the Theorem 2.2.1 in \cite{Panaretos2020}, it is equivalent to $W_{p}^{\left(\lambda\right)}\left(\nu_{p, k, j_l}^{(\lambda)}, \nu^*\right) \rightarrow 0$, which completes the proof.

Hence, \textcolor{black}{every weak subsequential limit} of $\left\{\nu_{p, k, j}^{(\lambda)}\right\}$ is indeed a robust Wasserstein barycenter of $\Lambda$.
\end{proof}

\begin{remark}
	\label{cor:barycenter-convergence}
	If $\Lambda$ is approximated by a growing discrete distribution $\Lambda^{(j)}=\sum_{i=1}^{j} w^{(j)}_i \delta_{\mu^{(j)}_i}$, then the sequence of barycenters $\{\nu_{p,k,j}^{(\lambda)}\}$ is tight, and \textcolor{black}{every weak subsequential limit} is a robust Wasserstein barycenter of $\Lambda$.
\end{remark}

\begin{remark}
	\citet{sinhoW2_unique} characterized the uniqueness of Wasserstein barycenters via geometric conditions and optimal transport theory [Sections 8.2–8.3], and further identify the strict convexity of \textcolor{black}{\(\mathbb{E}\left[ W_p(\nu, \mu) \right]^k\)} as the core reason for the uniqueness of Wasserstein barycenters. 
	However, due to the truncation parameter \(\lambda\), the objective function \textcolor{black}{\(F(\nu) = \mathbb{E}\left[ W_p^{(\lambda)}(\nu, \mu) \right]^k\)} fails to be strictly convex. 
	Consequently, the robust Wasserstein barycenter, which can be viewed as a specific variant of the classical Wasserstein barycenter, may not be unique in general cases. Since the conditions for uniqueness are technically challenging, we leave them for the future research.
\end{remark}

\begin{remark}
	As mentioned in Section 3.1, $c^{(\lambda)}$ is verified to be a metric on $\mathcal{X}$ \cite{ma2023theoretical}. Accordingly, the robust Wasserstein distance $W_p^{(\lambda)}$, analogous to the classical Wasserstein distance, serves as a well-defined metric on $\mathcal{P}(\mathcal{X})$. Nevertheless, the alternative RWB barycenter constructed in \cite{wang2024robust} follows the outlier-robust OT paradigm introduced by \citet{nietert}. As explicitly stated in \cite{nietert}, such a robust transport kernel no longer fulfills the rigorous metric axioms on $\mathcal{P}(\mathcal{X})$.
\end{remark}

\section{Computation algorithms}\label{sec.alg}

The Wasserstein barycenter is an effective tool for summarizing collections of probability distributions and has wide applications in machine learning and statistics. 
However, computing the Wasserstein barycenter is quite challenging, as it involves solving a high-dimensional linear optimization problem.  \citet{nphard} characterized the curse of dimensionality for the Wasserstein barycenter and prove that its computation is NP-hard.

The majority of algorithms constrain the support of the barycenter to a predetermined set of grid points to compute the optimal probability mass distribution and such methods are referred to as {\it fixed-support} algorithms.
Notably, the iterative Bregman projections (IBP) algorithm proposed in \cite{IBP} is among the most widely used fixed-support algorithms.
While the IBP algorithm exhibits remarkable efficiency and simplicity in low-dimensional cases, it suffers from the curse of dimensionality because of its reliance on a predefined fixed grid as the support.
Specifically, the number of support points for such a grid increases exponentially as the dimension increases.
Although the support of the Wasserstein barycenter is naturally sparse under the assumption that the number of input distributions is finite and all the input distributions have finite support (see e.g., \citet{MFT19} and \citet{GA27}), it is difficult for fixed-support methods to identify this sparse support a priori.

Fortunately, \textit{free-support algorithms} optimize for both the probability mass weights and the support of the barycenter.
This enables the number of support points to be much smaller than that of the full grid, thus effectively alleviating the curse of dimensionality in high-dimensional cases (see e.g. \citet{2013Fast} and \citet{simple_free}).
Additionally, we find that the free-support variant of IBP performs better than the original IBP in our image processing experiments.

To this end, we first prove that the RWB has finite support when the input distributions has finite support. We then introduce a free-support algorithm for the RWB, which adopts the well-known fixed-point algorithm mentioned in \citet{simple_free}.

\subsection{Support of RWB}\label{sec.support}

In the following, $\{\mu_i\}_{i=1}^{n}$ and $\nu$ denote some finite discrete probability measures in $\mathcal{W}^{\left(\lambda\right)}_{p}(\mathcal{X})$.
We then assume that each $\mu_i$ is supported on $\text{supp}\left(\mu_i\right) = \{ x_1^{\left(i\right)}, \ldots, x_{S_i}^{\left(i\right)} \} \subset {\cal X}$, so we can denote each $\mu_i = \sum_{s=1}^{S_i} q_s^{\left(i\right)} \delta_{x_s^{\left(i\right)}}$, where $S_i = |\text{supp}\left(\mu_i\right)|$ denotes the support size of $\mu_i$ and \textcolor{black}{$\{q_s^{\left(i\right)}\}$ is a sequence of nonnegative weights satisfying $\sum_{s=1}^{S_i} q_s^{\left(i\right)}=1$}.

In practice it is common to set \( k = p \) for the RWB \(\nu_{p, k}^{(\lambda)}\), similar to the classical WB. We \textcolor{black}{therefore} focus on the \(\nu_{p, p}^{(\lambda)}\). 
For convenience, let \(\nu_{p}^{(\lambda)}\) denote \(\nu_{p, p}^{(\lambda)}\), which is the solution to the optimization problem:
\begin{equation}
	\label{fv}
	\begin{aligned}
		&\underset{\nu \in \mathcal{W}^{\left(\lambda\right)}_{p}(\mathcal{X})}{\arg \min}f\left(\nu\right)
		,
		\\
		& f\left(\nu\right) = \sum_{i=1}^{n} w_i \left[ W_{p}^{\left(\lambda\right)}\left(\nu, \mu_i\right)\right]^{p} ,
	\end{aligned}
\end{equation}
where $\{w_i\}_{i=1}^{n}$ is a sequence of nonnegative weights satisfying $\sum_{i=1}^n w_i=1$.

The following theorem, which relies on \citet[Prop.~3]{MFT19} and \citet[Thm.~1]{GA27}, establishes a theoretical connection between Wasserstein barycenter problems and multi-marginal optimal transport. It further characterizes the support of WB and extends the corresponding theoretical framework to the RWB, as the RWB is a special case of the WB.
\begin{theorem}\label{thm.support} 
	According to Proposition~\ref{prop.exist}, equation~\eqref{fv} admits at least one solution, which is the RWB $\nu_{p}^{(\lambda)}$, and we have: 
		\begin{equation}
		\begin{aligned}
			\text{supp}
						\left(\nu_{p}^{(\lambda)}\right) \subseteq  
			\underset{y \in  \mathcal{X}}{\arg \min} \left\{ \sum_{i=1}^{n}w_ic^{\left(\lambda\right)} \left(y,x^{\left(i\right)}\right)^{p} | \forall x^{\left(i\right)}\in\text{supp}\left(\mu_i\right),\,i=1,\dots,n\right\}
		\end{aligned}
	\end{equation}
	Moreover, there exists an optimal solution $\nu^*$ s.t. 
	\begin{center}
		$ |\text{supp}\left(\nu^*\right)| \leq \sum_{i=1}^{n}S_i + 1 - n $.
	\end{center}
\end{theorem}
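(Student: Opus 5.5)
The plan is to use the classical correspondence between barycenters and multi-marginal optimal transport, with cost $c^{(\lambda)}$ in place of $d$. Existence of a solution to \eqref{fv} follows directly from Proposition~\ref{prop.exist}, applied with $\Lambda=\sum_i w_i\delta_{\mu_i}$ and $k=p$. For a tuple $\mathbf{x}=(x^{(1)},\dots,x^{(n)})\in\prod_i\text{supp}(\mu_i)$ define the multi-marginal cost
\[
C(\mathbf{x}) := \min_{y\in\mathcal{X}} \sum_{i=1}^n w_i\, c^{(\lambda)}(y,x^{(i)})^p,
\]
and fix a measurable selection $T(\mathbf{x})$ of minimizers. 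The minimum is attained because $\mathcal{X}$ is compact and $c^{(\lambda)}$ is continuous. Since the index set is finite, measurability is automatic.

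\textbf{Step 1: the two problems have the same value.} I will show that $\min_\nu f(\nu)$ equals the value of the discrete multi-marginal problem $\min\{\sum_{\mathbf{x}} C(\mathbf{x})\pi(\mathbf{x}) : \pi\in\Pi(\mu_1,\dots,\mu_n)\}$.

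For the inequality "$\geq$", take any $\nu$ and optimal couplings $\gamma_i\in\Pi(\nu,\mu_i)$. Glue them along the common marginal $\nu$ (gluing lemma) to get a measure on $\mathcal{X}\times\prod_i\text{supp}(\mu_i)$. Then
\[
f(\nu)=\int \sum_i w_i c^{(\lambda)}(y,x^{(i)})^p \ge \int C(\mathbf{x})\,d\pi(\mathbf{x}),
\]
where $\pi$ is the marginal of the glued measure on the $\mathbf{x}$-coordinates.

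For "$\leq$", take $\pi$ optimal and set $\nu:=T_\#\pi$. The couplings $(T,\mathrm{pr}_i)_\#\pi\in\Pi(\nu,\mu_i)$ are admissible, so $f(\nu)\le\sum C\pi$.

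\textbf{Step 2: support inclusion.} For an optimal $\nu$, equality must hold throughout Step 1. Hence, on the glued measure, almost every $y$ attains the minimum defining $C(\mathbf{x})$ for its tuple $\mathbf{x}$, and that tuple lies in $\prod_i\text{supp}(\mu_i)$. This gives $\text{supp}(\nu)$ inside the set of such minimizers $y$, ranging over all tuples. Everything is finite here, so "almost every" is the same as "every point of positive mass" and no closure issue arises. This is exactly the displayed inclusion.

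\textbf{Step 3: sparsity.} The multi-marginal problem is a linear program in the variables $\pi(\mathbf{x})\ge0$. It has $\sum_i S_i$ marginal equality constraints, of which only $\sum_iS_i-(n-1)$ are linearly independent, because all marginals share the same total mass. A basic optimal solution therefore has at most $\sum_i S_i-n+1$ nonzero entries. Pushing it forward by $T$ gives an optimal $\nu^*$ with $|\text{supp}(\nu^*)|\le\sum_iS_i+1-n$.

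\textbf{Main obstacle.} I expect the most care to be needed in Step 2. The claim is that \emph{every} RWB, not just the constructed one, is supported on the minimizer set, and this requires the equality argument with the glued coupling to be made precise. One subtlety is that $T$ may be multivalued because of non-uniqueness under truncation. This is harmless, because the inclusion only refers to the full argmin set. Since the $\mu_i$ are finite, I will phrase the gluing as disintegration of each $\gamma_i$ with respect to $\nu$, which avoids measure-theoretic technicalities.
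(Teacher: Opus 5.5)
Your proposal follows the route the paper takes by citation: it invokes the multi-marginal reformulation and the vertex-sparsity results from the barycenter literature and transfers them to the metric $c^{(\lambda)}$. Your Steps 1--3 correctly supply the details the paper omits, including non-unique minimizers handled through the selection $T$.

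One correction is needed in Step 2. Your reason ``everything is finite here, so a.e.\ is every atom'' is false, because an arbitrary RWB need not be finitely supported. For example, take $\mu_1=\delta_0$, $\mu_2=\delta_1$ on $\mathcal{X}=[0,1]$ with $p=1$, equal weights and $\lambda\ge 1$; then every $\nu$ is an RWB, including diffuse ones. The conclusion still holds for a different reason. Set $A=\bigcup_{\mathbf{x}}\arg\min_y \sum_i w_i c^{(\lambda)}(y,x^{(i)})^p$. Your equality argument gives $\nu(A)=1$, and $A$ is closed as a finite union of minimum-level sets of continuous functions, so $\text{supp}(\nu)\subseteq A$.
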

This theorem not only helps choose a suitable support for fixed-support algorithms in RWB computation, but also provides an upper bound on the support size for free-support algorithms.

\subsection{Free-support methods}
As noted above, we have that \textcolor{black}{$\text{supp}\left(\mu_i\right)$} and $\mu_i = \sum_{s=1}^{S_i} q_s^{\left(i\right)} \delta_{x_s^{\left(i\right)}}$, i.e., for $i=1,\dots,n$.
By Theorem~\ref{thm.support}, the support of $\nu_{p}^{(\lambda)}$ in \eqref{fv} is finite, so we may write supp($\nu_{p}^{(\lambda)}$) = $\{ y_1,\dots,y_R  \}$ and $\nu_{p}^{(\lambda)}=\sum_{r=1}^{R}p_r\delta_{y_r}$, where $y_r \in \mathcal{X}$ and \textcolor{black}{$\{p_r\}_{r=1}^{R}$ is a sequence of nonnegative weights satisfying $\sum_{r=1}^{R}p_r=1$}.

For $i=1,\dots,n$, let $\Pi^{i}$ denote the optimal transports plan $\Pi\left(\nu_{p}^{(\lambda)},\mu_i\right)$, i.e., 
$$\Pi^{i} = \sum_{r=1}^{R}\sum_{s=1}^{S_i}\pi_{rs}^{\left(i\right)}\delta\left(y_r,x_{s}^{\left(i\right)}\right), $$ 
where $\{\pi_{rs}^{\left(i\right)}\}$ minimizes $\sum_{r=1}^{R}\sum_{s=1}^{S_i}\pi_{rs}^{\left(i\right)}c^{(\lambda)}\left(y_r,x_{s}^{\left(i\right)}\right)^p$ and $\pi_{rs}^{\left(i\right)}$ denotes the weight of the coupling \(\Pi\left(\nu_{p}^{(\lambda)}, \mu_i\right)\) at \(\left(y_r, x_{s}^{(i)}\right)\). 

Given that $\pi^{i}=\left(\pi_{rs}^{\left(i\right)}\right)_{R \times S_i}$ and $\pi := \left(\pi^1,\dots,\pi^n\right) = \left(\pi_{rs}^{\left(i\right)}\right)_{n \times R \times S_i} $,
and we define a map $B_{\nu_{p}^{(\lambda)}} : \rm{supp}\left(\nu_{p}^{(\lambda)}\right) \rightarrow \mathcal{X}$ as follows:
\begin{equation}
	\label{eq.support} 
	b_r \stackrel{\triangle}{=} B_{\nu_{p}^{(\lambda)}}\left(y\right) :=  \underset{z \in  \mathcal{X}}{\arg \min} \sum_{i=1}^nw_i\sum_{s=1}^{S_i}\pi_{rs}^{\left(i\right)}c^{\left(\lambda\right)} \left(z,x^{\left(i\right)}\right)^{p},\quad
	r=1,\dots,R. 
\end{equation} 
In general, $w_i\left(i=1,\dots,n\right)$ is set to $\frac{1}{n}$ so we have  $$b_r =  \underset{z \in  \mathcal{X}}{\arg \min} \sum_{i=1}^n\sum_{s=1}^{S_i}\pi_{rs}^{\left(i\right)}c^{\left(\lambda\right)} \left(z,x^{\left(i\right)}\right)^{p},\quad
r=1,\dots,R.$$
Furthermore, we can define the new discrete measure as: $b\left(\nu_{p}^{(\lambda)}\right) := \sum_{r=1}^{R}p_r\delta_{b_r}$. \\
\textbf{Remark 4.2}
\textcolor{black}{Note that if \(d\) is continuous, then \(c^{(\lambda)} = \min(d, \lambda)\) is also continuous, which implies that \(\sum_{i=1}^nw_i\sum_{s=1}^{S_i}\pi_{rs}^{(i)}c^{(\lambda)} \left(z,x^{(i)}\right)^{p}\) is a continuous function on a compact space. This ensures the existence of \(b_r\) for \(r=1,\dots,R\).}

\begin{theorem}
	\label{reduce_f}
	For $\nu_{p}^{(\lambda)}$ and $b\left(\nu_{p}^{(\lambda)}\right)$, we have:
	\begin{center}
		$
		f\left(b\left(\nu_{p}^{(\lambda)}\right)\right) \leq f\left(\nu_{p}^{(\lambda)}\right),  
		$
	\end{center}
	where $f\left(\cdot\right)$ is defined in \eqref{fv}.
\end{theorem}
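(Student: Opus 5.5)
The plan is the standard fixed-point descent argument: take the optimal plans for $\nu_{p}^{(\lambda)}$, transfer them to $b(\nu_{p}^{(\lambda)})$ by relabelling support points, and use the pointwise minimality of each $b_r$ in \eqref{eq.support}.

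\textbf{Step 1 (rewrite the objective).} Write $\nu_{p}^{(\lambda)}=\sum_{r=1}^R p_r\delta_{y_r}$. Let $\pi^i=(\pi^{(i)}_{rs})$ be the optimal couplings between $\nu_{p}^{(\lambda)}$ and $\mu_i$ used to define $b_r$. Then
\begin{equation*}
f\bigl(\nu_{p}^{(\lambda)}\bigr)=\sum_{i=1}^n w_i\sum_{r=1}^R\sum_{s=1}^{S_i}\pi^{(i)}_{rs}\,c^{(\lambda)}\bigl(y_r,x_s^{(i)}\bigr)^p
=\sum_{r=1}^R\Phi_r(y_r),
\end{equation*}
where
\begin{equation*}
\Phi_r(z):=\sum_{i=1}^n w_i\sum_{s=1}^{S_i}\pi^{(i)}_{rs}\,c^{(\lambda)}\bigl(z,x_s^{(i)}\bigr)^p .
\end{equation*}
By definition $b_r\in\arg\min_z\Phi_r(z)$, and Remark 4.2 guarantees that this minimizer exists. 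Hence $\Phi_r(b_r)\le\Phi_r(y_r)$ for every $r$.

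\textbf{Step 2 (build feasible couplings for the new measure).} For each $i$, define
\begin{equation*}
\tilde\Pi^i=\sum_{r,s}\pi^{(i)}_{rs}\,\delta_{(b_r,x_s^{(i)})}.
\end{equation*}
Its second marginal is $\sum_r\pi^{(i)}_{rs}=q^{(i)}_s$, which is unchanged, so it equals $\mu_i$. Its first marginal is $\sum_r\bigl(\sum_s\pi^{(i)}_{rs}\bigr)\delta_{b_r}=\sum_r p_r\delta_{b_r}=b(\nu_{p}^{(\lambda)})$. Thus $\tilde\Pi^i\in\Pi\bigl(b(\nu_{p}^{(\lambda)}),\mu_i\bigr)$; formally, $\tilde\Pi^i$ is the pushforward of $\Pi^i$ under $(y_r,x)\mapsto(b_r,x)$. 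The only subtlety is that several $b_r$ may coincide. This is harmless, because masses simply add in both the measure $b(\nu_{p}^{(\lambda)})$ and the coupling, and the pushforward formulation handles it cleanly.

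\textbf{Step 3 (compare).} Since $W_p^{(\lambda)}$ is a minimum over couplings, each $\tilde\Pi^i$ gives an upper bound:
\begin{equation*}
\bigl[W_p^{(\lambda)}\bigl(b(\nu_{p}^{(\lambda)}),\mu_i\bigr)\bigr]^p\le\sum_{r,s}\pi^{(i)}_{rs}\,c^{(\lambda)}\bigl(b_r,x_s^{(i)}\bigr)^p .
\end{equation*}
Multiply by $w_i$, sum over $i$, and exchange the sums. This yields
\begin{equation*}
f\bigl(b(\nu_{p}^{(\lambda)})\bigr)\le\sum_r\Phi_r(b_r)\le\sum_r\Phi_r(y_r)=f\bigl(\nu_{p}^{(\lambda)}\bigr).
\end{equation*}

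The main care point is Step 2: one must check that the relabelled plans are genuine couplings of the new measure. This relies on the first marginal of $\pi^i$ being exactly $(p_r)$ for every $i$. That holds because all the $\pi^i$ share the same first marginal $\nu_{p}^{(\lambda)}$. If the paper's definition with $w_i=1/n$ dropped the weights, the same argument goes through verbatim with the corresponding $\Phi_r$.
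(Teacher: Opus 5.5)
Your proof is correct and follows the same chain of inequalities as the paper: bound $[W_p^{(\lambda)}(b(\nu_p^{(\lambda)}),\mu_i)]^p$ by the cost of the original optimal plans $\pi^{(i)}$ moved to the points $b_r$, then use the minimality of each $b_r$ and the optimality of $\pi^{(i)}$ for $\nu_p^{(\lambda)}$. Your Step 2 explicitly checks that the relabelled plans are genuine couplings of $b(\nu_p^{(\lambda)})$ and $\mu_i$ (including when some $b_r$ coincide); the paper's first inequality relies on this fact without stating it.
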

\begin{proof}
	Let $\Pi = \sum_{i=1}^{n}w_i\sum_{r=1}^{R}\sum_{s=1}^{S_i}{\pi}_{rs}^{\left(i\right)}\delta\left(y_r,x_{s}^{\left(i\right)}\right)$ denote the transport plan between $\nu_{p}^{(\lambda)}$ and $\{\mu_i\}_{i=1}^{n}$, and $\hat{\Pi} = \sum_{i=1}^{n}w_i\sum_{r=1}^{R}\sum_{s=1}^{S_i}\hat{\pi}_{rs}^{\left(i\right)}\delta\left(b_r,x_{s}^{\left(i\right)}\right)$ denote the transport plan between $b\left(\nu_{p}^{(\lambda)}\right)$ and $\{\mu_i\}_{i=1}^{n}$. Then we have:	
	$$
	\begin{aligned}
		f\left(b\left(\nu_{p}^{(\lambda)}\right)\right) &= 
		\sum_{i=1}^{n}w_i\left[W_{p}^{\left(\lambda\right)}\left(b\left(\nu_{p}^{(\lambda)}\right),\mu_i\right)\right]^p \\
		&= 
		\sum_{i=1}^{n}w_i\sum_{r=1}^{R}\sum_{s=1}^{S_i}\hat{\pi}_{rs}^{\left(i\right)}c^{\left(\lambda\right)}\left(b_r,x_s^{\left(i\right)}\right)^p \\
		&\leq
		\sum_{i=1}^{n}w_i\sum_{r=1}^{R}\sum_{s=1}^{S_i}\pi_{rs}^{\left(i\right)}c^{\left(\lambda\right)}\left(b_r,x_s^{\left(i\right)}\right)^p \\
		&= 
		\sum_{r=1}^{R}\sum_{i=1}^{n}w_i\sum_{s=1}^{S_i}\pi_{rs}^{\left(i\right)}c^{\left(\lambda\right)}\left(b_r,x_s^{\left(i\right)}\right)^p \\
		&=
		\sum_{r=1}^{R}\textcolor{black}{\underset{z \in \mathcal{X}}{\min}} \sum_{i=1}^{n}w_i\sum_{s=1}^{S_i}\pi_{rs}^{\left(i\right)}c^{\left(\lambda\right)}\left(z,x_s^{\left(i\right)}\right)^p \\
		&\leq
		\sum_{r=1}^{R}\sum_{i=1}^{n}w_i\sum_{s=1}^{S_i}\pi_{rs}^{\left(i\right)}c^{\left(\lambda\right)}\left(y_r,x_s^{\left(i\right)}\right)^p \\
		&=
		\sum_{i=1}^{n}w_iW_{p}^{\left(\lambda\right)}\left(\nu_{p}^{(\lambda)},\mu_i\right)^p \\
		&=
		f\left(\nu_{p}^{(\lambda)}\right) 
	\end{aligned}
	$$ 	\end{proof} 

Following the alternating update scheme, the support points are refined via equation~\eqref{eq.support}, followed by optimization of the associated probability weights. Optimizing the weights of a fixed support set guarantees the minimization of the sum of robust Wasserstein distances from the barycenter to the input distributions.
Denote the optimal weights as \(p^* = \left(p^*_1,\dots,p^*_R\right)\). The resulting barycenter satisfies
$$b^*\left(\nu_{p}^{(\lambda)}\right) := \sum_{r=1}^{R}p^*_r\delta_{b_r}.$$
In view of the optimality of \(p^*\), it holds that
$$f\left(b^*\left(\nu_{p}^{(\lambda)}\right)\right) \leq f\left(b\left(\nu_{p}^{(\lambda)}\right)\right).$$
We may then further optimize the support set of \(b^*\left(\nu_{p}^{(\lambda)}\right)\), i.e., \(b\left(b^*\left(\nu_{p}^{(\lambda)}\right)\right)\).

\textbf{Remark 4.3} Given an arbitrary discrete probability measure $\nu$,
we set $\nu_0 = \nu_{p}^{(\lambda)}, \nu_1 = b(\nu_0), \nu_k = b(b^*\left(\nu_{k-2}\right)),k=2,3,\dots$, then we can have $
f\left(\nu_{p}^{(\lambda)}\right) \geq f\left(\nu_1\right) \geq f\left(\nu_2\right) \geq \dots \geq 0$. Moreover, $\{f\left(\nu_k\right)\}$ converges by the monotone bounded convergence theorem.

According to Theorem~\ref{reduce_f}, replacing the measure $\nu_p^{(\lambda)}$ obtained by any fixed-support algorithm with $b(\nu_p^{(\lambda)})$ can reduce the sum of robust Wasserstein distances between the barycenter and input distributions.
Additionally, it is clear that the conclusions in this section apply not only to the robust Wasserstein space but also to other Wasserstein spaces. Algorithm~\ref{alg:free} provides the pseudocode of the free-support method.

\begin{algorithm}[h]
	\caption{Free-support method for RWB}
	\label{alg:free}
	\begin{algorithmic}[1] %
		\Require A sequence of discrete probability masses $\{\mu_i\}_{i=1}^n$ and their supports $\{\text{supp}(\mu_i)\}_{i=1}^n$, the cost \textcolor{black}{function $c^{\left(\lambda\right)}(\cdot,\cdot)^{p}$}, the original support of RWB supp$_0$  
		\While{not converge}
		\State Compute the optimal mass $\nu_{\text{mass}}$ by the fixed-support algorithm
		\State Update the support supp$_{\text{new}}$ by \eqref{eq.support}
		\EndWhile
		\Ensure $\nu_{\text{mass}}$ is the mass of RWB $\nu_{p}^{(\lambda)}$, and supp$_{\text{new}}$ is the support of  $\nu_{p}^{(\lambda)}$   
	\end{algorithmic}
\end{algorithm}

Furthermore, since the RWB is a special instance of the standard WB, various fixed-support WB solvers can be readily adapted within our flexible free-support framework for diverse practical scenarios.
Specifically, IBP \cite{IBP} and exact linear programming directly yield the optimal transport plan matrices $\Pi^{i}$.
Based on the derived optimal transport plans, we can not only update the support points but also directly compute the total transportation cost in Step 2 of Algorithm~\ref{alg:free}. In practical implementations, we terminate the iteration once either the total transportation cost or the support points converge, where the specific error threshold can be set according to practical numerical demands.

It is clear that equation~\eqref{eq.support} essentially solves the weighted $p$-th order barycenter problem with respect to $c^{(\lambda)}$. When $c^{(\lambda)}$ denotes the truncated Euclidean distance, support points can be updated via the block coordinate descent framework in \cite{BCD} or the subgradient descent method. For initialization, we can randomly sample support points from given input distributions.

\section{Numerical results} 

\subsection{Comparison of image processing}

The experiments in this section are thus restricted to low-dimensional image processing scenarios. The high-dimensional performance of the proposed method can be theoretically inferred from its sparsity and computational framework, and will be verified in future work with more sufficient computational resources.
{In this section, we compare the performance of the RWB, WB, and Wasserstein median (W-median) in \cite{YouShun} on simulated random elliptical images. The RWB and WB are computed via the IBP algorithm \cite{IBP} and its free-support variant derived earlier in this paper, while the W-median is implemented using Algorithm 1 in \cite{YouShun}.
Furthermore, as the free-support framework we mentioned earlier applies to RWB and WB (i.e., those minimizing the sum of \(W_{p}^p\)), and the W-median minimizes the sum of \(W_2\), we present only fixed-support results for the W-median.}
For any input image of $j \times \ell$ pixels, we can model it as a discrete probability measure supported on the grid $\{0,1, \ldots, j-1\}\times \{0,1, \ldots, \ell-1\}$. 
When computing the RWB and WB are computed by the free-support variant of the IBP algorithm, we take the support of the barycenter to consist of $R$ points in the space $[0, j-1] \times [0, l-1]$, where $R$ denotes the number of support points we selected for the barycenter. 
Given the sparsity of the support of the barycenter, it is reasonable to select \( R \) that is smaller than the total number of grid points.

In the experiments, we generate $n = 200$ random nested ellipse images of size $40\times 40$ pixels, each containing random perturbation points in the upper-right corner that serve as outliers. An example of such images is shown in Figure~\ref{elic}.
To avoid potential information loss (which requires a sufficiently large number of barycenter support points) while keeping $R$ from being excessively large, we test various values of \( R \) and  set \( R = 105 \). {Furthermore, we also test various values of \(\lambda\) to evaluate the corresponding results. We observe that values of \(\lambda\) that are neither too large nor too small—i.e., those striking a balance between information loss and robustness---yield optimal performance, consistent with the findings reported in \cite{ma2023theoretical}. Finally, we set \(\lambda = 5\) for the cost function \textcolor{black}{\(c^{(\lambda)}(\cdot,\cdot)\)}.}

{Additionally, in this experiment, we observed that the barycenter images computed using the ${W}_1$ distance are visually sharper than those obtained with ${W}_2$. Therefore, all the results of the RWB and WB presented below are computed with the ${W}_1$ distance.

From the results shown in Figure~\ref{elic_ans}, we draw the following conclusions:
\begin{enumerate}
	\item[(i)] The boundaries of the Wasserstein barycenter near the random perturbation points become blurred, accompanied by a noticeable shift toward the disturbance.
	\item[(ii)] Under both free-support and fixed-support settings, the RWB has sharper boundaries than the WB, with no obvious deviation toward the perturbation points.
	\item[(iii)] {Under the fixed-support setting, the W-median is less affected by the perturbations than the WB, but it is outperformed by the RWB.}
\end{enumerate}

This numerical exercise confirms that the RWB is more robust to random perturbation points when computed on random ellipse images.

\begin{figure}[h!t]
	\centering
	\includegraphics[width=.4\linewidth]{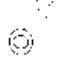}
	\caption{A random nested ellipse image with random outlier points in the upper-right corner.}
	\label{elic}
\end{figure}
\begin{figure}[h!t]
	\centering
	\setlength{\fboxsep}{2pt}  
	\setlength{\fboxrule}{0.2pt} 
	
	\begin{minipage}[b]{0.4\linewidth}
		\centering
		\fbox{\includegraphics[width=\linewidth]{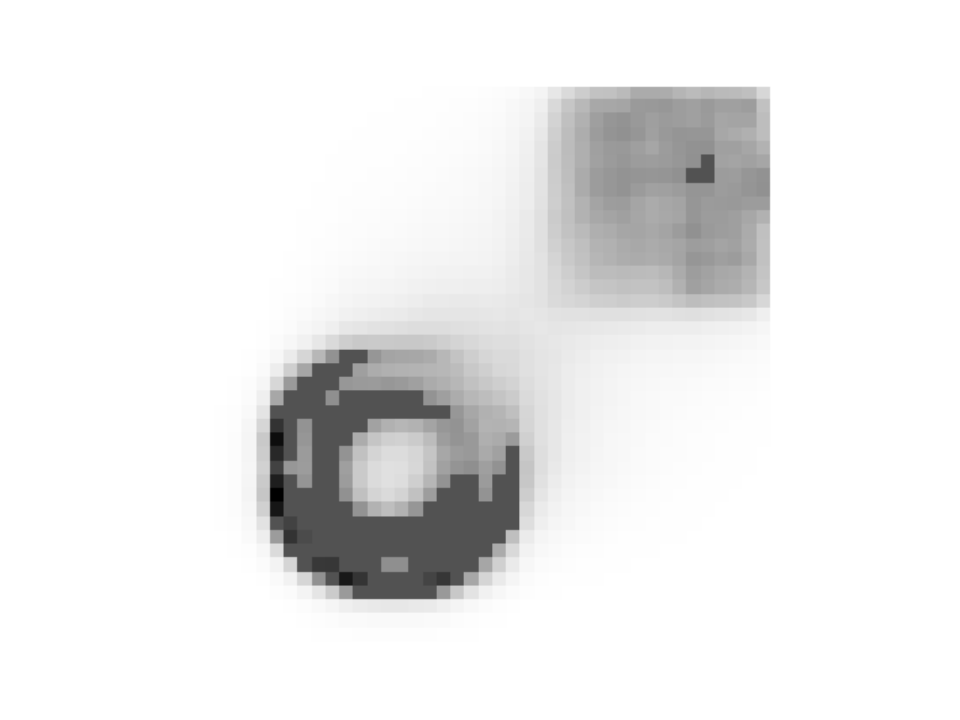}}
		\caption*{(a) Fixed-support WB}
	\end{minipage}
	\hfill %
	\begin{minipage}[b]{0.4\linewidth}
		\centering
		\fbox{\includegraphics[width=\linewidth]{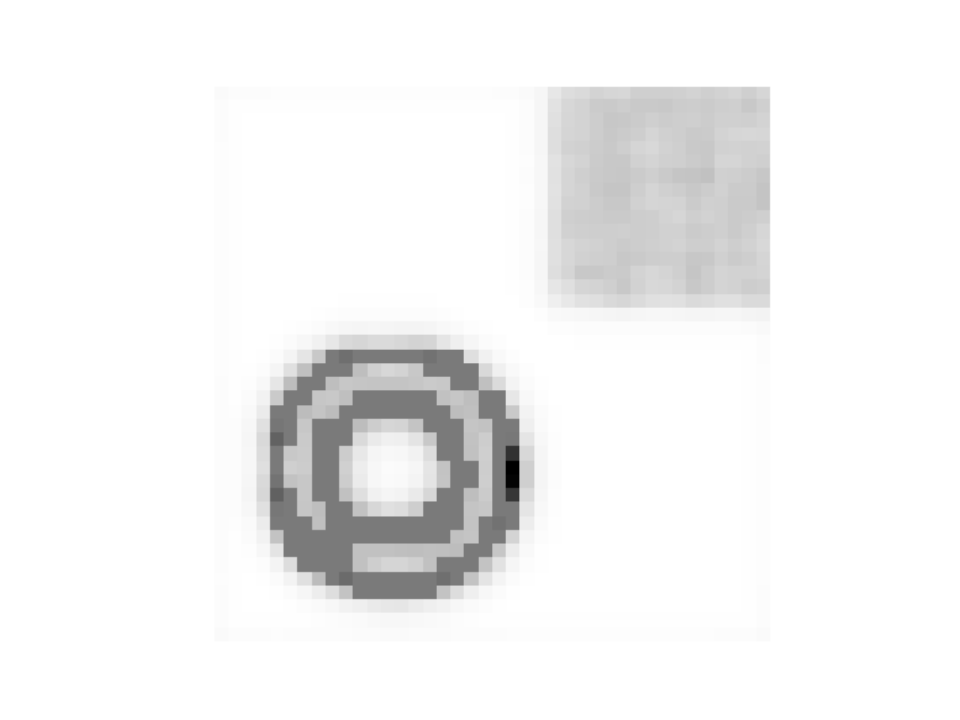}}
		\caption*{(b) Fixed-support RWB}
	\end{minipage}
	
	\begin{minipage}[b]{0.4\linewidth}
		\centering
		\fbox{\includegraphics[width=\linewidth]{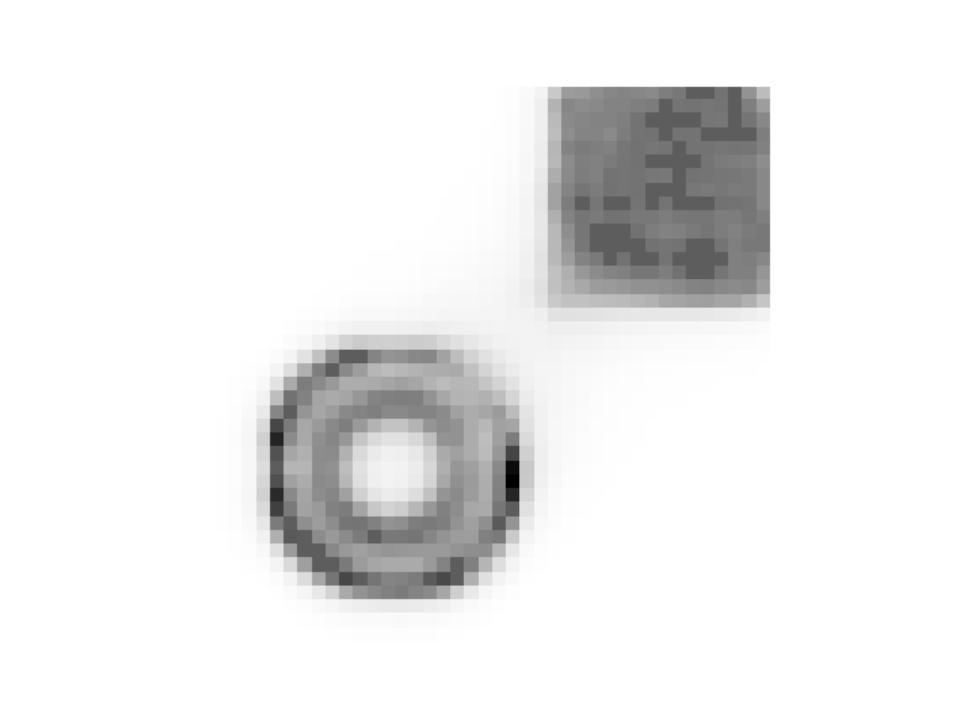}}
		\caption*{(c) Fixed-support W-median}
	\end{minipage}
	\vspace{1em} 
	
	\begin{minipage}[b]{0.4\linewidth}
		\centering
		\fbox{\includegraphics[width=\linewidth]{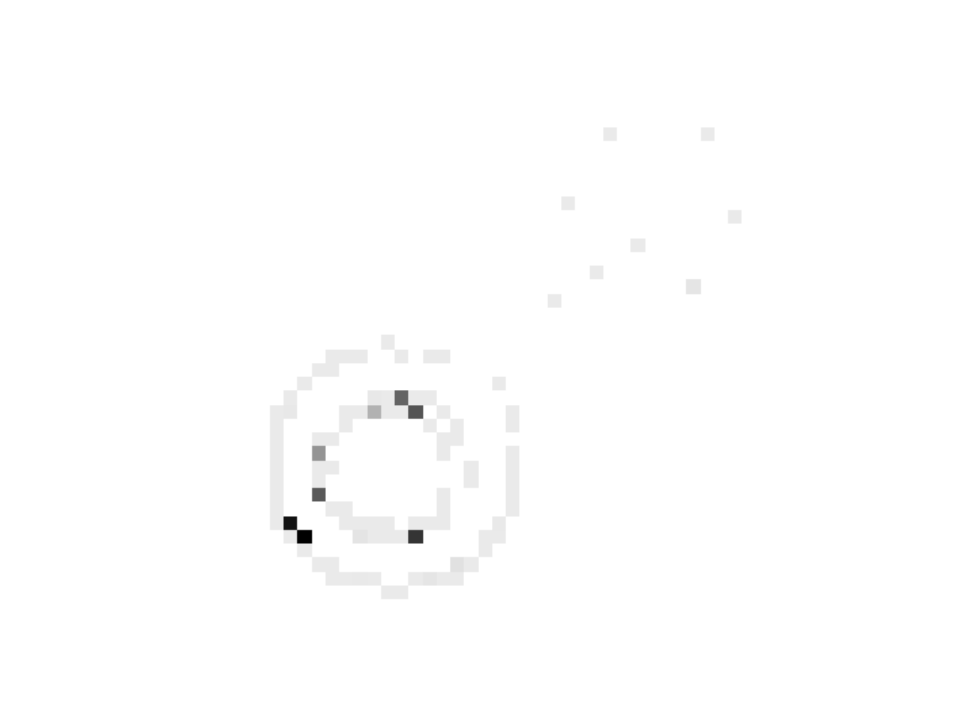}}
		\caption*{(d) Free-support WB}
	\end{minipage}
	\hfill
	\begin{minipage}[b]{0.4\linewidth}
		\centering
		\fbox{\includegraphics[width=\linewidth]{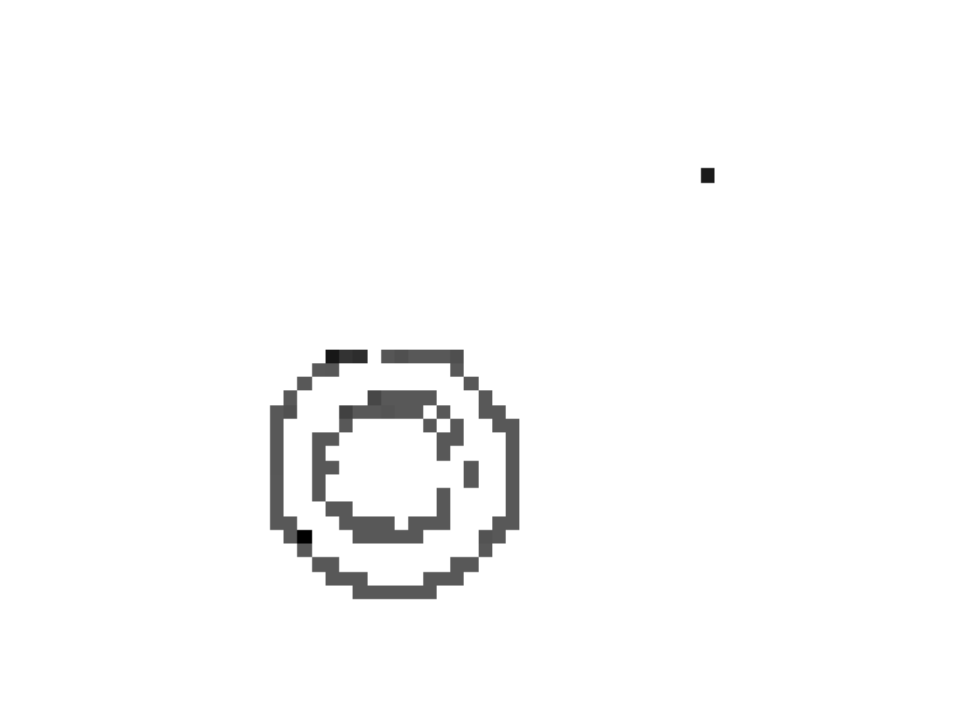}}
		\caption*{(e) Free-support RWB}
	\end{minipage}
	
	\caption{Barycenters of random nested ellipse images.}
	\label{elic_ans}
\end{figure}

\subsection{Comparison of numerical data}
\label{subsec:contamination}
In this section, we demonstrate the variations of the RWB, the WB and the W-median under different data contamination ratios through numerical simulations.  
In our experiment, we consider two source distributions: the first as the signal (Type 1 distributions) and the second as the contamination (Type 2 distributions): 
\begin{center}
	Type 1 : $N\left(\mu_0,\sigma_0\right),\mu_0 \sim U\left(-20,20\right),\sigma_0=1$\\
	Type 2 : $N\left(\mu_1,\sigma_1\right),\mu_1 \sim U\left(30,70\right),\sigma_1=1$
\end{center}
In this setting, the true Wasserstein barycenter of the signal is \( \mathcal{N}(0,1) \).

We first sample observations from the clean signal and contamination distributions to construct a $100\times 1000$ array, corresponding to $100$ independent datasets where all the samples within a single dataset share identical location parameters, and each dataset contains $1000$ samples.
{Then we employ $k$-means clustering to generate support points. After empirical tuning over different values, we fix the number of support points to 100, and derive 100 input distributions by calculating the frequency of each dataset with respect to these support points.}
Finally, we apply only the fixed-support algorithm to compute barycenters given the simplicity of this setting and the small number of support points.
In addition, the cost function \(c(\cdot,\cdot)\) is taken as the Euclidean distance,
and we test various truncation values \(\lambda \in \{10, 20, 30, 40, 50, 60, 70\}\) for the modified cost function \(c^{(\lambda)}\),
where 71.8 approximates the maximum value of the original cost matrix in this experiment. 
By analyzing the results under different values of \(\lambda\)  shown in Figure~\ref{diff_guass_bary} and Table~\ref{tab-lam}, we observe that the RWB achieves optimal performance at \(\lambda = 30\) (since it yields the smallest value of mean Wasserstein distance (under different contamination ratios) between the RWB and true WB).  

{Figure~\ref{gauss} illustrates the variations in Wasserstein distance from the WB, the W-median and the RWB to the true Wasserstein barycenter and across data contamination ratios ranging from 0\% to 25\% in 1\% increments.}
From the results shown in Figure~~\ref{gauss}, we draw the following conclusions:
\begin{enumerate}
	\item[(i)] The Wasserstein distances from the WB, W-median, and RWB to the true-WB monotonically increase as the data contamination ratio increases.
	\item[(ii)] The variation magnitude of the WB curve is slightly greater than that of the RWB curve, and the W-median exhibits relatively stable performance.
	\item[(iii)] Among the three methods, the RWB is closest to the true-WB and least affected by data contamination, followed by the W-median, with the WB performing the worst.	
\end{enumerate}
\begin{table}[!htbp]
	\centering
	\begin{tabular}{lccccccc}
		\hline 
		$\lambda$& 10& 20& 30& 40& 50& 60& 70 \\ \hline
		W.distance.mean& 10.822& 6.605& 6.590& 7.644& 8.779& 9.679& 10.450 \\ \hline 
	\end{tabular}
	\caption{Mean Wasserstein distance (under different contamination ratios) between the RWB and true WB across different values of $\lambda$.}
	\label{tab-lam}
\end{table}
\begin{figure}[h!t]
	\centering
	\includegraphics[width=.5\linewidth]{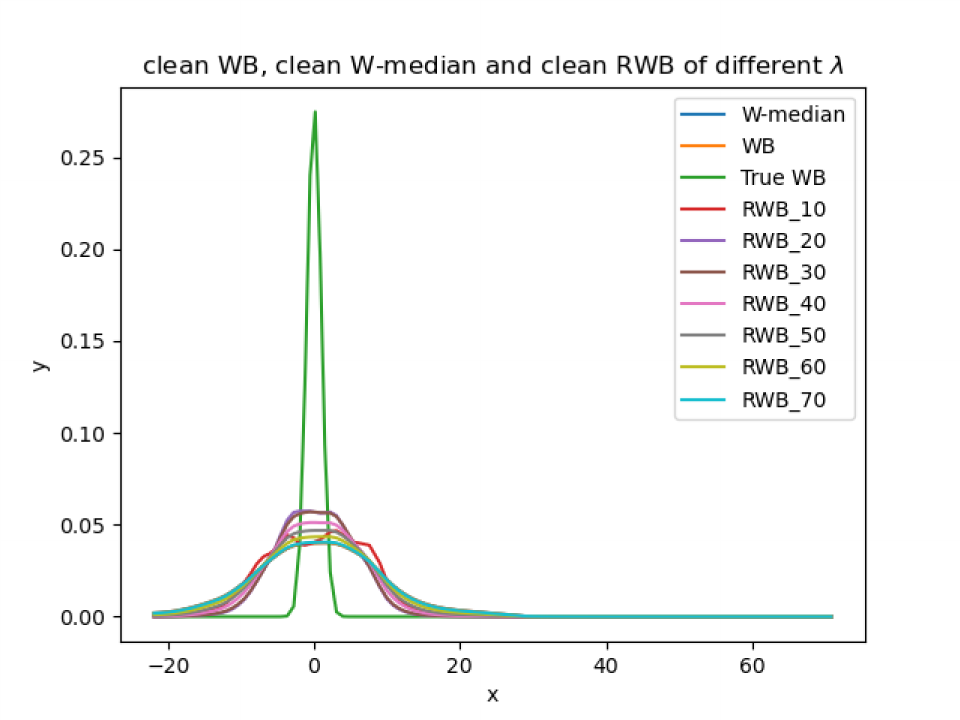}
	\caption{Distribution curves of barycenters.}
	\label{diff_guass_bary}
\end{figure}
\begin{figure}[h!t]
	\centering
	\includegraphics[width=.5\linewidth]{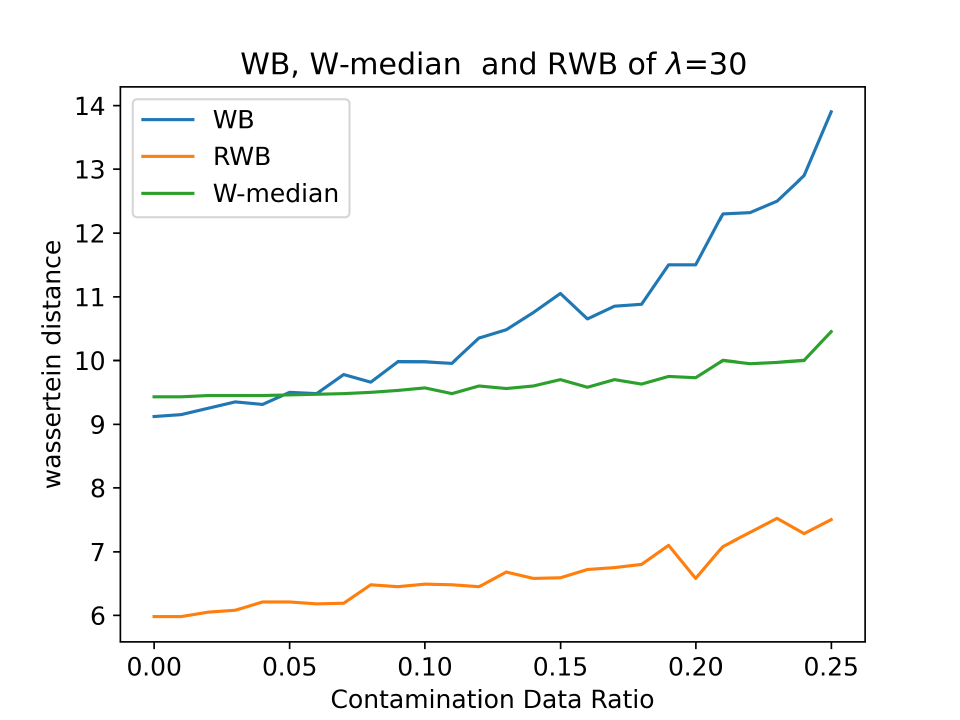}
	\caption{Wasserstein distances from the WB, RWB and W-median to the true WB.}
	\label{gauss}
\end{figure}

In addition to the contamination case, the RWB also exhibits greater robustness than both the WB and W-median in heavy-tailed cases. 
We thus examine data generated from the t-distribution $\text{t}(df,m,s)$, where $df$, $m$ and $s$ denote the degrees of freedom, location and scale, respectively. 
In this experiment, to ensure the existence of the second-order moment of the t-distribution, we set $\text{df}=3$, $s=1$, and $m \sim U(-50,50)$.
Similar to the contamination case, we test the truncation value \(\lambda\) within the range \(\lambda \in \{30,40,50,60,70,80,90,100,110\}\). Among these values, \(\lambda = 60\) yields the optimal result, as illustrated in Figure~\ref{t} and Figure~\ref{t_shape}. %
The results of the experiments demonstrate that the RWB outperforms both the WB and W-median for this heavy-tailed case.

\begin{figure}[htb]
	\centering  
	\includegraphics[width=.5\linewidth]{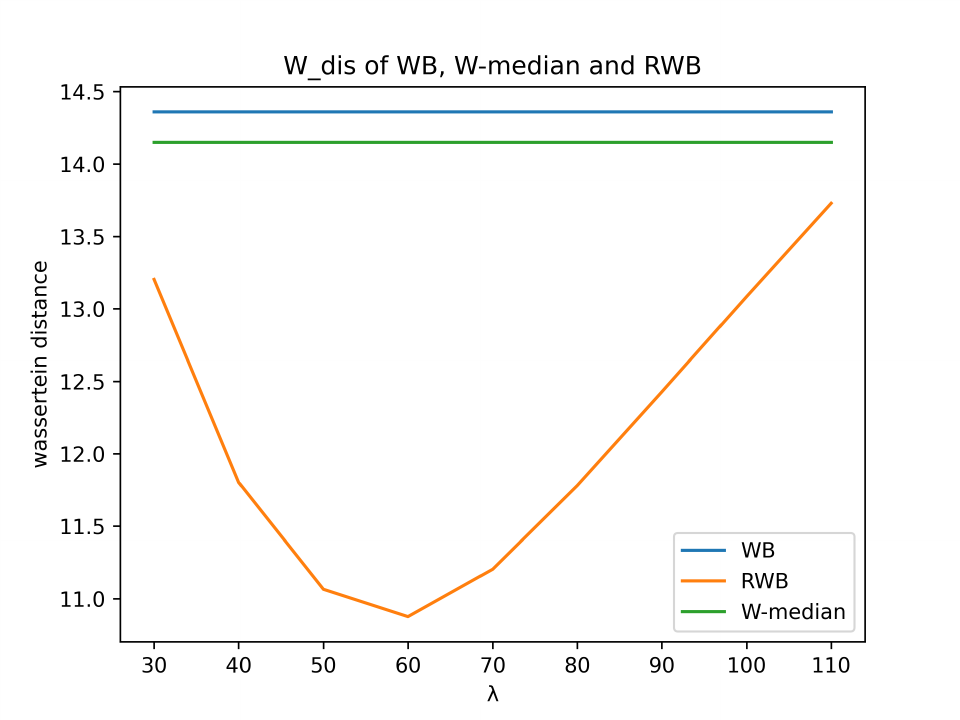}\vspace{12pt} 
	\caption{Wasserstein distances from the WB, RWB and W-median to the true WB in the heavy-tailed case.} 
	\label{t}
\end{figure} 
\begin{figure}[htb]
	\centering  
	\includegraphics[width=.45\linewidth]{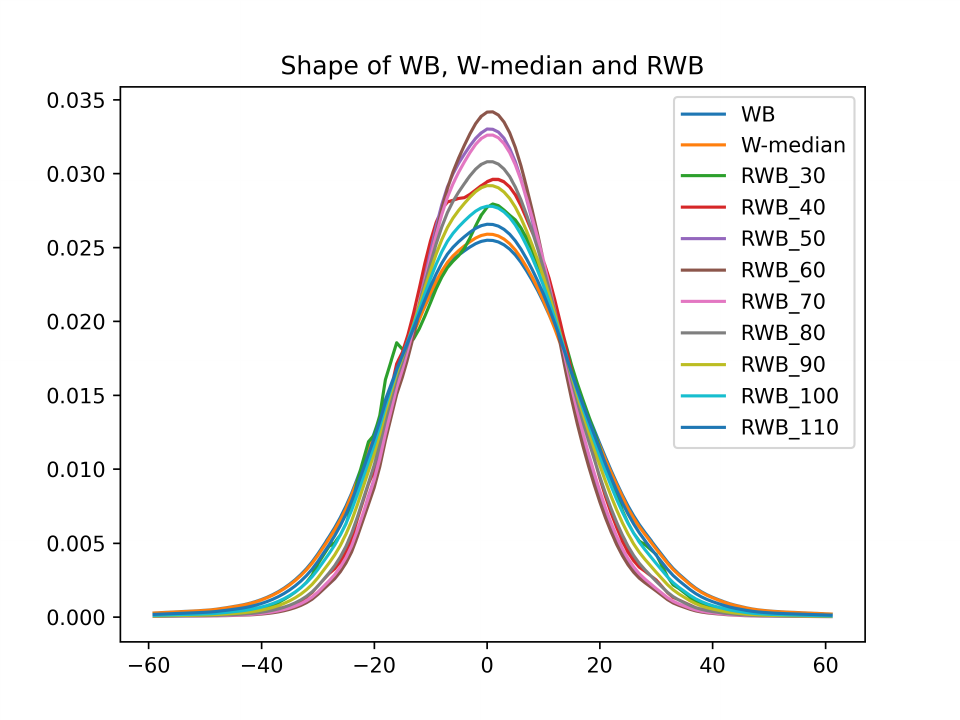}\vspace{12pt} 
	\caption{Distribution curves of barycenters in the heavy-tailed case.} 
	\label{t_shape}
\end{figure}

In summary, the simulation results presented above indicate that the RWB outperforms both the WB and the W-median in terms of robustness under both contamination and heavy-tailed cases.

\begin{remark}
\textcolor{black}{In practice, we should select a moderate $\lambda$---neither too large nor too small---to strike a balance between robustness and information preservation. For instance, $\lambda$ can be chosen based on the quantiles of the data, specifically as the distance between the quantile where the values begin to surge and the median of the gradually increasing values. Our extensive simulations show that for $\lambda$ within a wide range, the performance does not vary significantly; for example, as shown in Fig.~5, when $\lambda \in [40, 80]$, the value roughly fluctuates around the interval $[11.0, 12.0]$.}
\end{remark}

\subsection{Real data analysis}

In this section, we analyze the closing prices of the stock dataset in \cite{stockdata} and demonstrate the robustness of the RWB.
We calculate the daily log-returns \( r_t = \log(X_t / X_{t-1}) \) for the closing prices \( X_t \) of each stock, and we obtain the distribution of \( r_t \) for each stock via kernel density estimation (KDE).

In practice, the values of $r_t$ typically cluster around 0, yet this stock dataset contains several excessively large $r_t$ values, which are clearly abnormal.
Therefore, we use the Isolation Forest method proposed in \cite{isolation} to identify critical thresholds for outliers, partitioning \( r_t \) into two subsets: \textit{clean} and \textit{out}. These two subsets are then combined to form the complete set of total \( r_t \).
Furthermore, we apply k-means clustering to select 190 points from the \textit{clean} subset and 10 points from the \textit{out} subset as support points, which are used to characterize the distribution of \(r_t\) for each stock.
The total support set, denoted as support$_{total}$, is the union of support$_{clean}$ and support$_{out}$, comprising 200 support points in total. 
We then calculate the frequency distribution of the KDE for each stock over the \( \text{support}_{\text{clean}} \) points, which serves as the input distribution under the clean case.
Similarly, we compute the frequency distribution of the KDE for each stock over \( \text{support}_{\text{total}} \), which is used as the input distribution under the outlier case.

{For this setting, the barycenters derived from these input distributions enable us to observe the overall return distribution of stock closing prices corresponding to the market represented by the dataset.}

Additionally, we compare the barycenters calculated from the outlier-included case with those from the outlier-free (\textit{clean}) case to evaluate their robustness. Intuitively, a smaller deviation between the two scenarios indicates that the barycenter is less influenced by outliers, implying stronger robustness.

{As illustrated in Figure~\ref{stock_bary_dis}, the Wasserstein distance between RWB\(_{\text{total}}\) and WB\(_{\text{clean}}\) remains consistently smaller than those between WB\(_{\text{total}}\) and WB\(_{\text{clean}}\), as well as between W-median\(_{\text{total}}\) and WB\(_{\text{clean}}\), across various values of \(\lambda\), exhibiting minimal variation.
	Furthermore, W-median\(_{\text{total}}\) and WB\(_{\text{total}}\) perform almost identically, with the corresponding results being 0.8679 and 0.8691, respectively.}
Furthermore, Figure~\ref{stock_bary_shape} reveals that the shape of RWB\(_{\text{total}}\) most closely approximates that of WB\(_{\text{clean}}\) when \(\lambda = 0.28\). Consequently, we select \(\lambda = 0.28\) as the optimal parameter. The final barycenter results in Figure~\ref{stock_contrast} demonstrate that RWB\(_{\text{total}}\) is significantly closer to WB\(_{\text{clean}}\) than both WB\(_{\text{total}}\) and W-median\(_{\text{total}}\).
In summary, for this dataset containing substantial outliers, RWB proves to be significantly less susceptible to outliers than WB and W-median, thereby exhibiting superior robustness.
\begin{table}[!htbp]
	\centering
	\resizebox{\linewidth}{!}{
		\begin{tabular}{lccccccccccc}
			\hline 
			$\lambda$& 0.24& 0.26& 0.28& 0.30& 0.32& 0.34& 0.36& 0.38& 0.40& 0.42& 0.44 \\ \hline
			W.distance& 0.0616& 0.0629& 0.0641& 0.0652& 0.0661& 0.0671& 0.0679& 0.0686& 0.0694& 0.0700& 0.0706 \\ \hline 
		\end{tabular}
	}
	\caption{Wasserstein distance between RWB and the WB$_{\text{clean}}$ across different $\lambda$.}
	\label{tab-lam-wdis}
\end{table}
\begin{figure}[htb]
	\centering   
	\includegraphics[width=.5\linewidth]{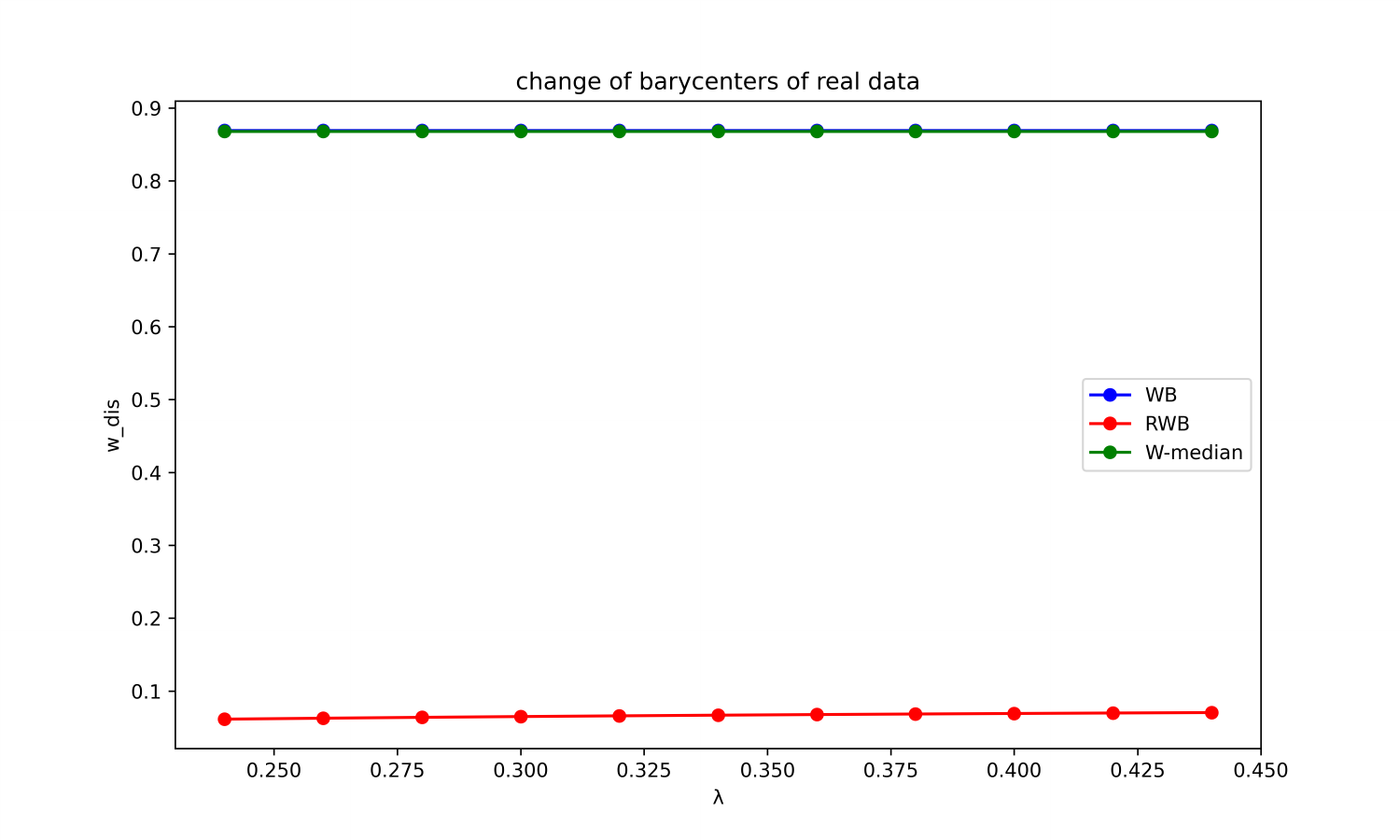}\vspace{12pt} 
	\caption{Wasserstein distance from WB\(_{\text{total}}\), RWB\(_{\text{total}}\), and W-median\(_{\text{total}}\) to WB\(_{\text{clean}}\) over different \(\lambda\).}  
	\label{stock_bary_dis}
\end{figure}
\begin{figure}[htb]
	\centering   
	\includegraphics[width=.5\linewidth]{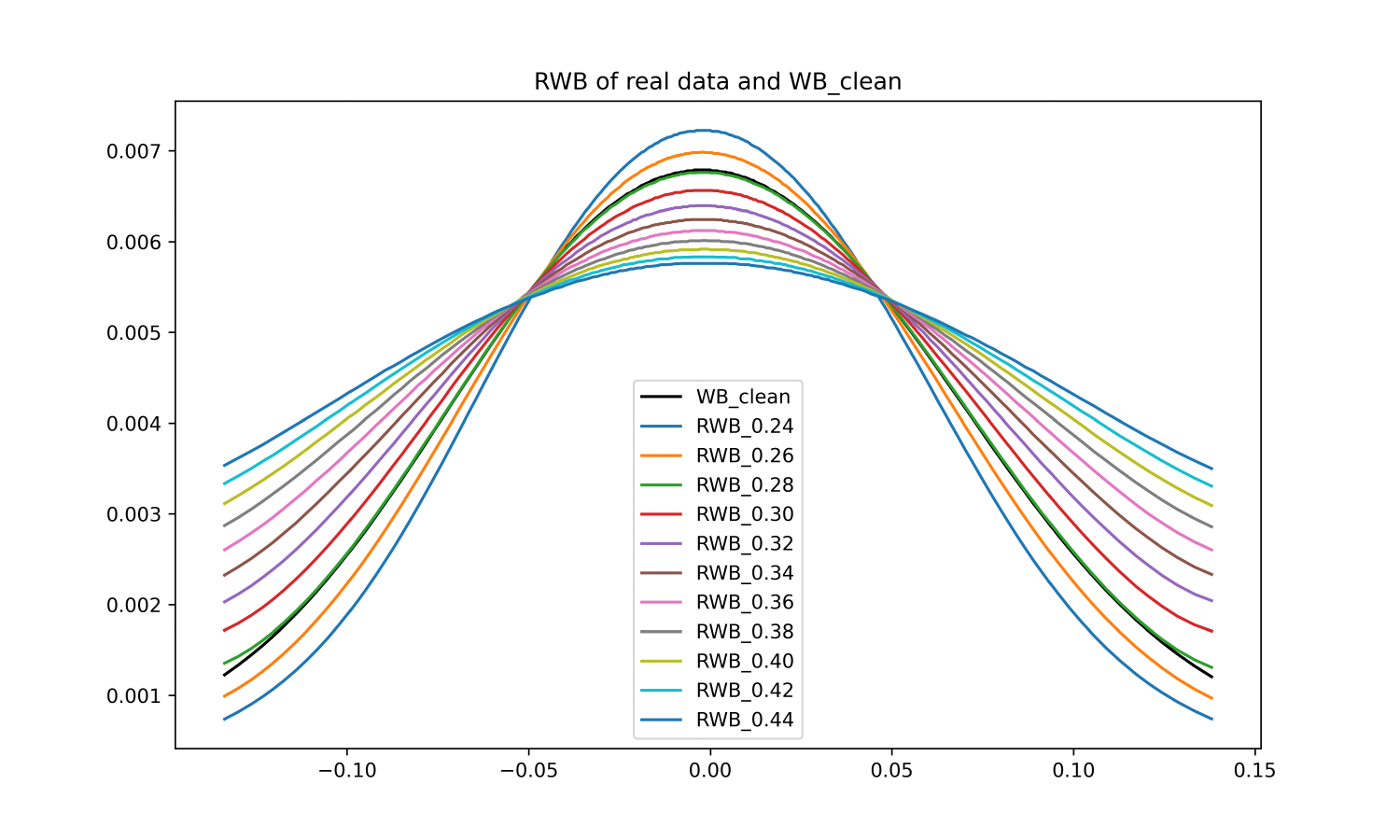}\vspace{12pt} 
	\caption{Curves of barycenters.}  
	\label{stock_bary_shape}
\end{figure}
\begin{figure}[htb]
	\centering     
	\includegraphics[width=.55\linewidth]{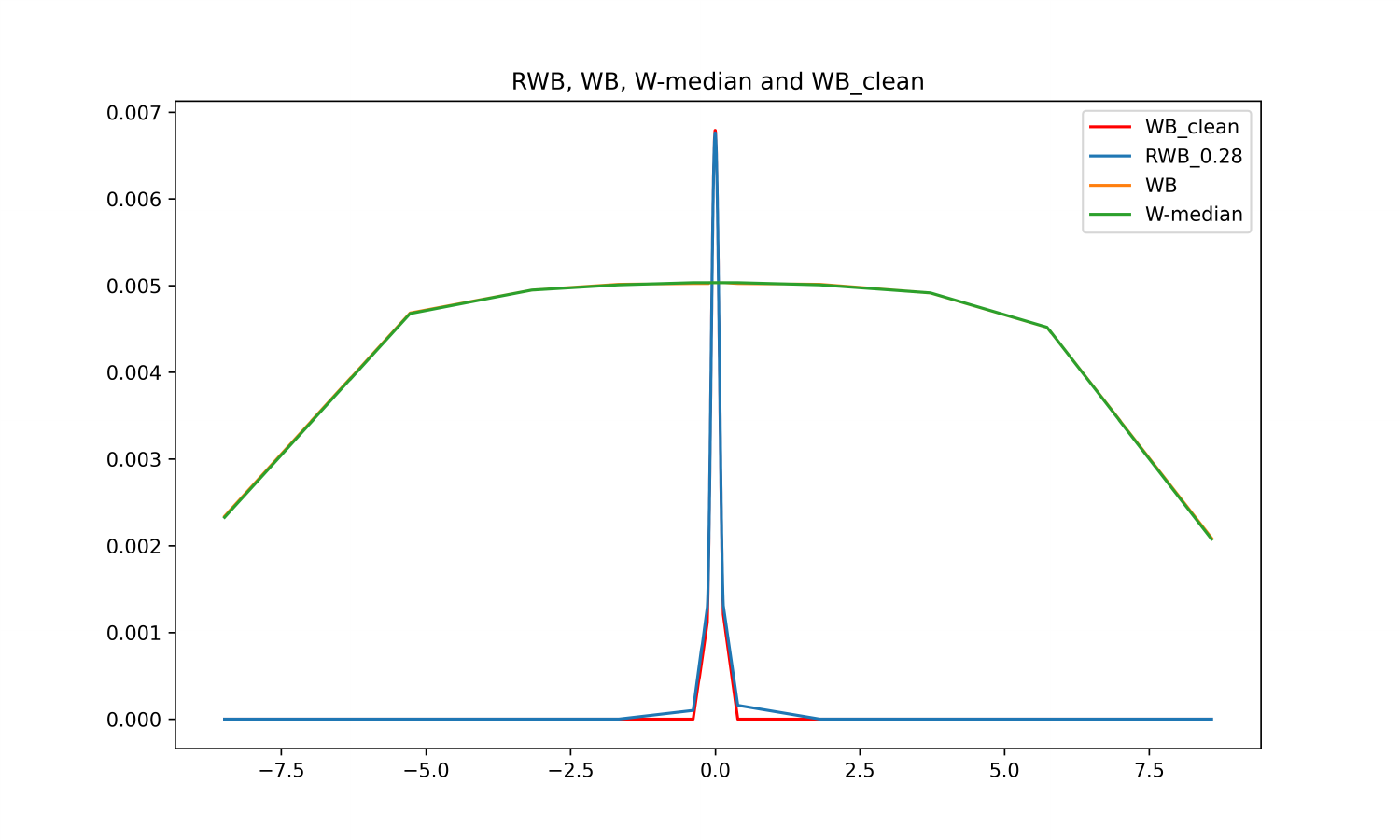}\vspace{12pt}
	\caption{The figure depicts the clean-case Wasserstein barycenter (WB\(_{\text{clean}}\)), WB$_{\text{total}}$, W-median$_{\text{total}}$ and RWB$_{\text{total}}$, where RWB is calculated with $\lambda=0.28$.}  
	\label{stock_contrast} 
\end{figure}

\section{Conclusions}

This paper introduces the robust Wasserstein barycenter to mitigate the instability of the classical Wasserstein barycenter in the presence of outliers. 
We provide theoretical guarantees for the RWB, including its existence and consistency, and propose an effective computational method. 
Extensive numerical analysis, comprising both simulated and real-world data, illustrates that the RWB is significantly more robust than the standard WB, making it a compelling alternative for practical use. 
For future research, one may consider an entropy-regularized version of the RWB, and explore its theoretical and empirical performance in the presence of outliers and high dimensionality. 
Additionally, one may combine the RWB with current cutting-edge deep learning models.








\section*{Conflict of Interest}
There is no conflict of interest to report.

\section*{Preprint Statement}
Research presented in this article was posted on a preprint server prior to publication in JUSTC. The corresponding preprint article can be found here: arXiv:2603.07563.

\section*{Funding Information}
Hang Liu's research was supported by the National Natural Science Foundation of China (No. 12401372) and University of Science and Technology of China (No. 2024ycjg13).

\section*{Biographies}
Zixiong Cheng is currently a master's student in statistics at the School of Management, University of Science and Technology of China, under the supervision of Associate Professor Hang Liu. His research interests focus mainly on optimal transport and statistics.

Hang Liu received his Ph.D. in Statistics from Lancaster University in 2021. He is currently an associate professor at the School of Management, University of Science and Technology of China. His research interests include optimal transport, nonparametric statistics, Riemannian geometry, and time series.

\end{document}